\tikzstyle{bank}=[circle, draw, inner sep = 1]
\newcommand{\N}{\mathbb{N}}
\newcommand{\veca}{\mathbf{a}}
\newcommand{\vecp}{\mathbf{p}}
\newcommand{\vecr}{\mathbf{r}}
\newcommand{\vecBeta}{\bm{\beta}}
\newcommand{\vecEll}{\bm{\ell}}
\newcommand{\calF}{\mathcal{F}}
\newcommand{\classNP}{\textsf{NP}}
\newcommand{\MH}[1]{{\color{orange} Martin: #1}}
\title{Fractional Claims Trades and Donations in Financial Networks}
\titlerunning{Fractional Claims Trades and Donations} 
\author{Martin Hoefer}{RWTH Aachen University, Germany }{mhoefer@em.uni-frankfurt.de}{https://orcid.org/0000-0003-0131-5605}{}
\author{Lars Huth}{RWTH Aachen University, Germany }{huth@algo.rwth-aachen.de}{https://orcid.org/0009-0008-6855-8525}{}
\author{Lisa Wilhelmi}{RWTH Aachen University, Germany }{wilhelmi@algo.rwth-aachen.de}{https://orcid.org/0000-0003-0845-1941}{}
\authorrunning{M. Hoefer, L. Huth, L. Wilhelmi} 
\keywords{Financial Networks, Claims Trade, Donations} 
\renewcommand{\paragraph}[1]{\medskip \noindent \textsf{\textbf{#1}} $\,$}
\begin{document}

\maketitle

\begin{abstract}
    Exploring measures to improve financial networks and mitigate systemic risks is an ongoing challenge. We study claims trading, a notion defined in Chapter 11 of the U.S. Bankruptcy Code. For a bank $v$ in distress and a trading partner $w$, the latter is taking over some claims of $v$ and in return giving liquidity to $v$. The idea is to rescue $v$ (or mitigate contagion effects from $v$’s insolvency). We focus on the impact of trading claims fractionally, when $v$ and $w$ can agree to trade only part of a claim. In addition, we study donations, in which $w$ only provides liquidity to $v$. They can be seen as special claims trades.

    When trading a single claim or making a single donation in networks without default cost, we show that it is impossible to strictly improve the assets of both banks $v$ and $w$. Since the goal is to rescue $v$ in distress, we study creditor-positive trades, in which $v$ improves and $w$ remains indifferent. We show that an optimal creditor-positive trade that maximizes the assets of $v$ can be computed in polynomial time. It also yields a (weak) Pareto-improvement for all banks in the entire network. In networks with default cost, we obtain a trade in polynomial time that weakly Pareto-improves all assets over the ones resulting from the optimal creditor-positive trade. We generalize these results to trading multiple claims for which $v$ is the creditor.

    Instead, when trading claims with a common debtor $u$, we obtain \classNP-hardness results for computing trades in networks with default cost that maximize the assets of the creditors and Pareto-improve the assets in the network. Similar results apply when $w$ donates to multiple banks in networks with default costs. For networks without default cost, we give an efficient algorithm to compute optimal donations to multiple banks.
\end{abstract}

\section{Introduction}
Systemic risk is an important threat in financial networks. Since the global 2007/08 banking crises there have been repeated incidents that exemplified the potentially disastrous consequences of default and contagion in the financial system. Recently, the challenges of stability in financial markets have become apparent with problems surrounding Credit Suisse in 2023 and the subsequent acquisition by UBS. The resolution was orchestrated by the Swiss Federal Council. It enabled (and essentially forced through) the acquisition by UBS to stabilize the global financial markets. It can be seen as a mixture of acquisition and government bailout. Such bailouts have also been applied in earlier times of crises, but there is often a substantial discomfort in the society to use taxpayer money for bailouts of banks. As such, alternative, more intrinsic measures to stabilize financial networks are needed.

A growing body of research strives to design and understand measures to mitigate the effects of default in financial networks. Tools and techniques from theoretical computer science are used to characterize the inherent complexity of financial networks and to design algorithms that improve their stability. A very useful approach to capture key features of financial networks is the model by Eisenberg~and~Noe~\cite{EisenbergN01}. There are $n$ banks represented by nodes in a directed graph. Directed edges represent debt claims, and edge weights are liabilities. Each bank can use incoming payments to clear its debt. Moreover, each bank has a non-negative amount of external assets, which can also be used for payment of claims. Every bank pays their debts in proportion to the liability. Fundamental algorithmic work in this model focused on the computation of \emph{clearing states}, i.e., fixed-point solutions for payments in the network. These states are known to measure the exposure of banks in the network to insolvencies. Recently, there is work to understand the effects of changing external assets, payment rules, and/or network structure, which we discuss in detail below. 

In this paper, we consider the operation of \emph{claims trading} to rescue a bank in distress. Claims trading is defined in Chapter 11 of the U.S.\ Bankruptcy Code. The idea was formalized recently in the context of the Eisenberg-Noe model in~\cite{HoeferVW24}, and it represents a local change of network structure and external assets. Suppose a bank $v$ is in distress (e.g., Credit Suisse) and there is a potential trading partner $w$ (e.g., UBS or a central bank). Bank $w$ is taking over some claims of $v$ and in return giving liquidity to $v$. The idea is to rescue $v$, or if that is impossible, at least to mitigate the contagion effects from $v$’s insolvency. As such, in claims trading the bank $w$ buys claims from $v$ and provides immediate funds in return. This avoids a centralized intervention of forcing another bank into an acquisition or using guarantees by the government for a bailout. Instead, buyer $w$ shall recover the cash invested in buying a claim via increased payments from within the network. 

Claims trading was formalized and studied in~\cite{HoeferVW24}, but only with \emph{binary} trades -- claims can either be traded entirely or not at all. We consider a substantial generalization of this operation as, more realistically, a claim is not a rigid body and can usually be split or paid partially from different sources.
As such, the more natural scenario is \emph{fractional} claims trading, where $v$ and $w$ can agree to trade only parts of a claim. In this paper, we initiate the study of algorithms for fractional claims trading. In addition, we incorporate default costs and study a natural operation of \emph{donation} in which $w$ provides liquidity to $v$ without trading any of the claims. Again, we incorporate incentives in the sense that donor $w$ shall recover the funds invested via increased payments within the network. As a preliminary insight, we observe that donations can be interpreted as a special case of (binary) claims trades. 

\paragraph{Contribution and Outline.}
We study fractional claims trading in the financial network model by Eisenberg~and~Noe~\cite{EisenbergN01}, as well as in a generalization of that model with default cost~\cite{RogersV13}. Since a bank $v$ has much more control about claims in which it acts as creditor, we focus on trading claims in which $v$ is the creditor (i.e., \emph{incoming} edges in the network). 

After a formal definition of the model in Section~\ref{sec:model}, our first observation in Section~\ref{sec:prelim} is that in networks without default cost, any single claims trade or single donation can only lead to strict improvement for \emph{either} the assets of creditor/recipient $v$ \emph{or} buyer/donor $w$. Since our goal is to rescue $v$, we focus on \emph{creditor-positive} trades, which generate a strict improvement for the creditor and do not deteriorate the assets for the buyer. These trades also yield a Pareto-improvement of all banks in the network. Our interest lies in \emph{optimal} creditor-positive trades that maximize the assets\footnote{Claims trades leave the liabilities of each bank fixed. By optimizing the assets we also optimize the equity of the creditor.} of the creditor, and as a special case, optimal donations that maximize the assets of the recipient. They also maximize the Pareto-improvement for all banks.

In Section~\ref{sec:single} we focus on trading a single claim and, as a special case, making a single donation. In networks without default cost, we show that an optimal creditor-positive trade that maximizes the assets of $v$ can be computed in polynomial time. More generally, for arbitrary default cost, we obtain a trade in polynomial time that weakly Pareto-improves the assets of every bank over the ones resulting from the optimal creditor-positive trade. Our algorithm computes \emph{some} trade that provides, pointwise for every bank, at least as many assets and equity as the optimal creditor-positive trade. In networks with default cost, it could also be a \emph{positive} one, in which both $v$ and $w$ strictly improve. This result relies on a \emph{default hierarchy}, our main technical tool that allows us to handle sets of insolvent banks in Section~\ref{sec:hierarchy}. At the basis of our approach we analyze the clearing states that emerge after a trade based on the assets of creditor $v$. The default hierarchy decomposes the possible assets of $v$ into up to $n$ intervals, which correspond to regions in which the set of insolvent banks remains invariant. When the assets of $v$ transition from a higher to an adjacent lower interval, a non-empty set of additional banks slide into default. We give a polynomial-time algorithm to find the intervals and the corresponding sets of insolvent banks. Our algorithm uses a procedure for clearing state computation in~\cite{RogersV13} as a subroutine, but needs to overcome further technical challenges in identifying the correct (possibly open) interval borders along with corresponding sets of insolvent banks. Here we leverage an intricate connection between \emph{minimization} of the assets of $v$ at an interval border and the clearing state being a \emph{maximal} fixed point. Given the default hierarchy, we formulate computation of the trade as a set of non-linear mathematical programs in Section~\ref{sec:trade}. Interestingly, using a suitable variable substitution we can transform them into LPs that can be solved in polynomial time.

In Section~\ref{sec:incoming} we generalize the results for single trades to trading multiple claims for which $v$ is the common creditor. This generalization relies on a characterization that optimal trades of multiple claims require to trade at most one claim fractionally. This allows to use our approach for trading single claims as a building block. The fractional claim is determined based on the sorted order of payments of the debtors. We refine the default hierarchy to also express the ordering of claims in terms of payments. On each level of the hierarchy the payments are linear in the assets of $v$, so there are only a polynomial number of orderings.


Instead of increasing the assets or even rescuing a bank $u$, in Section~\ref{sec:outgoing} we try to mitigate the contagion of $u$'s default by trading claims with the creditors of $u$. We again focus on trades that Pareto-improve the network and maximize the sum of assets of the creditors of $u$. Here, we obtain \classNP-hardness for computing the trade which maximizes the assets of $u$'s creditors in networks with default costs. A similar result applies when $w$ donates to multiple banks in networks with default costs. In contrast, for networks without default cost we can provide an efficient algorithm to compute optimal donations to multiple banks.

\paragraph{Related Work.}
Our work is based on the network model in~\cite{EisenbergN01} and the extension to default cost discussed in~\cite{RogersV13}. The basic solution concept in this model is a clearing state, a fixed-point assignment of payments on each edge and resulting assets of banks. Existence, uniqueness, and computational complexity of clearing states in the standard model and in extensions to credit-default swaps has been of interest recently in~\cite{EisenbergN01,CsokaH24,SchuldenzuckerS17,IoannidisKV22,IoannidisKV23PPAD,SchuldenzuckerS20ambiguity,PappW21}.

A notable feature of the standard model is the assumption of proportional payment. This assumption has been changed and generalized in several works, which shed light on structure and computation of clearing states with ranking-based or even general monotone payments as well as game-theoretic incentives of banks to change them~\cite{BertschingerHS20,CsokaH18,KanellopoulosKZ21,HoeferW22,IoannidisKV23}.

Injecting additional assets into the network and bailouts have been studied in static networks~\cite{KanellopoulosKZ22,PapachristouK22} and networks that are subject to repeated shocks~\cite{PapachristouBK23}. In contrast, all transfers of assets in claims trading are intrinsic to the network, and a bank providing these assets must not be harmed. 

Several recent studies have proposed network adjustments to increase the assets for some or even all banks. Related to our work are \emph{debt swaps}, in which two creditors swap debt contracts with the same liability. Indeed, a single claims trade can be interpreted as a debt swap with an artificial auxiliary claim. The structural and computational properties of debt swaps have been analyzed in~\cite{PappW21EC}. Computational complexity of sequential debt swaps in networks with general monotone payments were analyzed in~\cite{FroeseHW23}. Notably, this is different from approaches that study sequential implementation of payments in a static network~\cite{PappW21sequential,CsokaH18}.

Other network adjustments were analyzed in game-theoretic scenarios, including operations of forgiving or canceling debt~\cite{PappW20,KanellopoulosKZ22,KeijzerTV24} (in which a creditor strategically lowers an incoming claim) or debt transfers~\cite{KanellopoulosKZ23} (in which a claim with creditor $v$ is forward directly to one of the creditors of $v$). Here pure equilibria are often absent, and their existence is hard to decide. As a notable contrast, lending games~\cite{egressy2024price} yield a Bertrand-style price competition for lending assets to a given bank (which then introduces new liability relations into the network) and give rise to unique equilibrium payments.

Most closely related to our paper is work on donations~\cite{TongKV24} or pre-payments~\cite{prepayments}. The focus is incentives and equilibrium existence in a game-theoretic setting (that allows restricted classes of donations) rather than the direct optimization problems that we solve here.
Our work generalizes the model of (binary) claims trading proposed recently in~\cite{HoeferVW24}. The authors consider networks without default cost and general monotone payment functions. The paper shows \classNP-hardness results and algorithms that guarantee approximate solutions. Our work considers fractional trades, for which we encounter different conditions. Notably, optimal binary multi-trades of incoming edges are \classNP-hard -- for fractional ones we give an efficient algorithm, even in networks with default cost.

\section{Model and Preliminaries}
\subsection{Claims Trades and Donations}
\label{sec:model}

\paragraph{Network Model.} 
We consider claims trades in the seminal network model of Eisenberg and Noe~\cite{EisenbergN01}. A financial network $\calF = (V,E,\vecEll,\veca^x)$ consists of (1) a set $V$ of financial institutions (termed \emph{banks} throughout), (2) a set of edges $E$, where edge $(u,v) \in E$ is used to model a \emph{debt relation} between creditor $v \in V$ and debtor $u \in V$, (3) a vector $\vecEll = (\ell_e)_{e \in E}$ of \emph{liabilities}, i.e., $\ell_{(u,v)} \in \N_{>0}$ is an edge-weight\footnote{For convenience, all liabilities and external assets are represented as integers in standard logarithmic encoding. All our results hold similarly when liabilities and external assets are rational numbers.} and specifies the amount that $u$ owes to $v$, and (4) a vector $\veca^x = (a_v^x)_{v \in V}$ of \emph{external assets}\footnote{We use the superscript $x$ to indicate that these are assets external to the network.}, i.e., $a_v^x \in \N_{\ge 0}$ is the amount of money owned by bank $v$ that is not inherent to the network. We use $n = |V|$ to refer to the number of banks. The set of outgoing edges of a bank $v$ is denoted by $E^+(v) = \{(v,u) \in E \mid u \in V\}$ and the set of incoming ones by $E^-(v) = \{(u,v) \in E \mid u \in V\}$. The \emph{total liabilities} $L_v$ of $v$ is the sum of weights of outgoing edges of $v$, i.e., $L_v = \sum_{e \in E^+(v)} \ell_e$.

We use $a_v$ to denote the total assets available to bank $v$, which are external assets as well as funds received over incoming edges. Bank $v$ is assumed to distribute her total assets proportionally to its outgoing edges, i.e., edge $e \in E^+(v)$ receives a \emph{payment} of $p_e = \min\{a_v, L_v\} \cdot \ell_e/L_v$. If $a_v \ge L_v$, then the bank is solvent. Any excess funds $a_v - L_v$ remain as a profit with $v$, and $v$ does not overpay its claims. Otherwise, if $a_v < L_v$, the bank is in default. We consider a generalized variant of the model, which includes a notion of \emph{default cost}~\cite{RogersV13} -- in case of default, the assets of bank $v$ are further decreased to a fraction $\delta \in [0,1]$. The total assets of $v$ are 
\[
    a_v = \begin{cases} a_v^x + \sum_{e \in E^-(v)} p_e & \text{if $v$ is solvent, or}\\
                        \delta \cdot  \left(a_v^x + \sum_{e \in E^-(v)} p_e\right) & \text{otherwise.}\end{cases}
\]
These equalities, together with non-negativity $p_e \ge 0$, yield a set of fixed-point constraints for the vector $\vecp = (p_e)_{e \in E}$ of payments. The set of all fixed-point payments is known to form a complete lattice~\cite{EisenbergN01,RogersV13}. Indeed, the fixed point is often unique~\cite{EisenbergN01,CsokaH24}. In any case, we follow standard conventions in the literature and focus on the \emph{supremum} of the lattice, which we term the \emph{clearing state} $\vecp$ of $\calF$. It can be computed in polynomial time~\cite{EisenbergN01}. Given the clearing state $\vecp$, we obtain the \emph{total assets} $a_v$ given above and the \emph{recovery rate} $r_v = \min\{a_v/L_v, 1\}$. The clearing state can then be expressed by $p_e = r_u \cdot \ell_e$ for each $u \in V$ and $e = (u,v) \in E$. 

\paragraph{Claims Trades.} 
If a bank $v$ in default is unable to settle all its debt, this can represent a risk for the network. The creditors of $v$ then receive less payments, which can lead to further defaults, and in this way a dynamics of defaults can spread in the network. Similar to recent work~\cite{HoeferVW24}, we consider a network adjustment to address such potentially disastrous defaults: A benevolent bank (such as, say, a central bank) can buy incoming edges of $v$, thereby take over the risk of default of $v$ and decreased payments on the edges. In turn, $w$ ensures $v$ is solvent by providing guaranteed funds directly. Our focus in this paper is a general class of \emph{fractional} trades. Instead of a \emph{binary} decision to transfer an \emph{entire} edge $(u,v)$ or not~\cite{HoeferVW24}, we allow to transfer an arbitrary fraction $\beta \in [0,1]$ of the edge to $w$.

Formally, consider banks $u,v$ and $w$ with edge $e$ with $e = (u,v)$ and $\ell_{e} \ge 0$. Suppose $u$ is in default. For a claims trade with traded fraction $\beta \in [0,1]$, we change $\calF$ to $\calF' = (V,E \cup \{(u,w)\},\vecEll',\veca'^x)$, the \emph{post-trade network}, as follows. We add a new edge\footnote{If $(u,w)$ already exists in $E$, we add $\beta\ell_e$ to $\ell_{(u,w)}$ and set $\ell'_{(u,w)} = \ell_{(u,w)} + \beta \ell_e$. Due to proportional payments, this has the same effect as creating a new, separate multi-edge for the traded claim.} $(u,w)$ with liability $\ell'_{(u,w)} = \beta\ell_e$ and decrease the liability of $e$ to $\ell'_e = (1-\beta)\ell_e$. For all other edges $e' \in E \setminus \{(u,v),(u,w)\}$, there is no change $\ell'_{e'} = \ell_{e'}$. Thus, any payment from $u$ towards the claim will now be split in portions $1-\beta$ and $\beta$ received by $v$ and $w$, respectively. 

In turn, $w$ gives a \emph{return} payment $\rho \in [0,\beta \ell_e]$ to $v$. For convenience, we define $\alpha = \rho/(\beta \ell_e)$ as the \emph{haircut rate} and express the return by $\rho = \alpha \beta \ell_e$. Since $w$ is supposed to provide guaranteed funds to support the solvency of $v$, we separate the return from the clearing state and model it as a \emph{transfer of external assets} from $w$ to $v$. Hence, $w$ can invest at most her external assets as a return, so every trade must satisfy $\rho = \alpha \beta \ell_e \le a_w^x$. After a trade the external assets of $v$ and $w$ are $a'^x_v = a_v^x + \rho$ and $a'^x_w = a_w^x - \rho$. 
For all other banks the external assets remain unchanged, $a'^x_i = a^x_i$, for every $i \in V \setminus \{v,w\}$.

\paragraph{Variants of Claims Trades.}
We consider three variants of claims trades. For a \emph{single trade}, we are given a network $\calF$ with distinct banks $u,v,w \in V$ and an edge $e = (u,v) \in E$. 
The goal is to choose a fraction $\beta$ and a haircut rate $\alpha$. The clearing state in the post-trade network $\calF'$ is denoted by $\vecp'$. $v$ is called the \emph{creditor} and $w$ the \emph{buyer}. 

We consider two generalizations of single trades to multiple edges. For a \emph{multi-trade of incoming edges}, there are distinct banks $v,w$ in a network $\calF$. Consider the set of incoming edges of $v$ from banks other than $w$, i.e., $E^-(v) \setminus \{(w,v)\} = \{e_1,\ldots,e_k\}$. We can choose, for each $i \in [k]$, a fraction $\beta_i \in [0,1]$ and a haircut rate $\alpha_i \in [0,1]$. 
After the trade, a network $\calF'$ emerges: We change the liabilities exactly as in a sequence of single claims trades for each edge $e_i$. The external assets change accordingly, $a'^x_{v} = a^x_v + \sum_{i=1}^k \alpha_i \beta_i \ell_{e_i}$, and $a'^x_w = a^x_w - \sum_{i=1}^k \alpha_i \beta_i \ell_{e_i} \ge 0$. 

Similarly, for a \emph{multi-trade of outgoing edges}, there are distinct banks $u,w$ in a network $\calF$. Consider the set of outgoing edges of $u$ to banks other than $w$, i.e., $E^+(u) \setminus \{(u,w)\} = \{e_1,\ldots,e_k\}$. The goal is to choose, for each $i \in [k]$, a fraction $\beta_i \in [0,1]$ and a haircut rate $\alpha_i \in [0,1]$. 
After the trade, a network $\calF'$ emerges: The liabilities are the same as they were after performing a sequence of single trades of the edges $e_i$. The external assets change accordingly, $a'^x_{v_i} = a^x_{v_i} + \alpha_i \beta_i \ell_{e_i}$, and $a'^x_w = a^x_w - \sum_{i=1}^k \alpha_i \beta_i \ell_{e_i} \ge 0$. 

We proceed with a small example of trading a single claim.

\paragraph{Example.}
    Consider the simple example network $\calF$ in Figure~\ref{fig:ex-intro} (left). For simplicity, we assume no default cost, i.e., $\delta = 1$. The assets are $a_v = 2$ and $a_w = 5$. Suppose we trade the edge $(u,v)$ to $w$. Recall that our goal is to improve the conditions for the creditor bank $v$. By increasing the traded fraction $\beta$, the upper bound $\beta \ell_e$ for the return payments $\rho$ increases, and the remaining payments from $u$ to $v$ decrease. Hence, as $\beta$ grows, $w$ can pay a higher return -- and potentially has to do so in order to generate an improvement of the assets of $v$. 

    Consider the middle network $\calF'$. It emerges from the binary trade that obtains maximal assets of $v$. Since all liabilities are 4, with a return $\rho = a_w^x = 3$ we transfer all external assets to $v$, since they eventually get paid back to $w$. We obtain $a'_v = 3$ and $a'_w = a_w = 5$. 

    The right network $\calF'$ emerges from the fractional trade that obtains maximal assets of $v$. We only trade a $\frac 34$-fraction of the edge $(u,v)$. Hence, edge $(u,w)$ has liability $\beta \ell_e = 3$, which still allows a return of $\rho = a_w^x = 3$. Moreover, some payments of $u$ still reach $v$ over the remaining fraction of $(u,v)$ with liability 1. Overall, we obtain $a'_v = 3.5$ and $a'_w = a_w = 5$.

    This shows that fractional trades can yield strictly better assets than binary ones. Note that only one of $v$ or $w$ (but not both) improves strictly, and we have $\beta = 1$ and/or $\rho = a_w^x$. We prove these conditions in general, even for multi-trades, in Propositions~\ref{prop:noPositive}~and~\ref{prop:structure}. \hfill $\blacksquare$
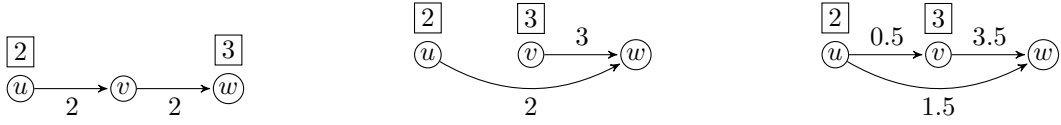
\begin{figure}[t]
    \centering
    \begin{tikzpicture}[>=stealth', shorten >=1pt, auto, node distance=1cm, transform shape, align=center]
        \node[bank] (u) at (0,0) {$u$};
        \node[bank] (v) [right=of u] {$v$};
        \node[bank] (w) [right=of v] {$w$};
        \node[rectangle, draw=black, inner sep=2.5pt] (xu) [above=0.1cm of u] {$2$};
        \node[rectangle, draw=black, inner sep=2.5pt] (xw) [above=0.1cm of w] {$3$};
        \draw[->] (u) -- node[below] {$2$} (v);
        \draw[->] (v) -- node[below] {$2$} (w);
    \end{tikzpicture}
    \hspace{2cm}
    \begin{tikzpicture}[>=stealth', shorten >=1pt, auto, node distance=1cm, transform shape, align=center]
        \node[bank] (u) at (0,0) {$u$};
        \node[bank] (v) [right=of u] {$v$};
        \node[bank] (w) [right=of v] {$w$};
        \node[rectangle, draw=black, inner sep=2.5pt] (xu) [above=0.1cm of u] {$2$};
        \node[rectangle, draw=black, inner sep=2.5pt] (xv) [above=0.1cm of v] {$3$};
        \draw[->] (u) to [bend right] node[below] {$2$} (w);
        \draw[->] (v) -- node[above] {$3$} (w);
    \end{tikzpicture}
    \hspace{2cm}
    \begin{tikzpicture}[>=stealth', shorten >=1pt, auto, node distance=1cm, transform shape, align=center]
        \node[bank] (u) at (0,0) {$u$};
        \node[bank] (v) [right=of u] {$v$};
        \node[bank] (w) [right=of v] {$w$};
        \node[rectangle, draw=black, inner sep=2.5pt] (xu) [above=0.1cm of u] {$2$};
        \node[rectangle, draw=black, inner sep=2.5pt] (xv) [above=0.1cm of v] {$3$};
        \draw[->] (u) -- node[above] {$0.5$} (v);
        \draw[->] (u) to [bend right] node[below] {$1.5$} (w);
        \draw[->] (v) -- node[above] {$3.5$} (w);
    \end{tikzpicture}
    \caption{    
    \label{fig:ex-intro}
    Left: Network $\calF$ from our introductory example. Boxed node labels indicate external assets. Edge labels indicate payments. All edges have a liability of 4. Middle: Network $\calF'$ after binary trade with $\beta = 1$ and $\rho = a_w^x = 3$. Right: Network $\calF'$ after fractional trade with $\beta = \nicefrac 34$ and $\rho = a_w^x = 3$.}
\end{figure}

\paragraph{Donations.}
A \emph{single donation} represents a transfer of (some of) the external assets $a_w^x$ of bank $w$ to bank $v$, without the trade of any claim. For convenience, we represent a donation by an equivalent claims trade as follows. We add (1) an auxiliary source bank $u$ without incoming edges and external assets, and (2) an edge $(u,v)$ with infinite liability $\ell_{(u,v)} = \infty$. Then any donation from $w$ to $v$ is equivalent to a claims trade of edge $(u,v)$ to $w$. In particular, since $u$ has no external assets and no incoming edges, the payment on $(u,v)$ is 0 before and after the trade. Moving edge $(u,v)$ to $w$ has no effect on the clearing state -- only the return $\rho$ (representing the donation) changes payments and assets. Since $\ell_{(u,v)} = \infty$, $\rho$ is only restricted by $a_w^x$.

In a \emph{multi-donation}, we are given a bank $w$ and a set $C \subseteq V \setminus \{w\}$ of banks. $w$ can make non-negative transfers to each of the banks in $C$, which sum up to at most the external assets $a_w^x$. By using a single auxiliary bank $u$ and an edge $(u,v)$ with infinite liability for each $v \in C$, we represent a multi-donation as a multi-trade of outgoing edges.

\subsection{Structural Preliminaries}
\label{sec:prelim}

We first observe that in networks without default cost, there exists no \emph{positive} single trade in which \emph{both $v$ and $w$ strictly} profit (i.e., with $a'_v > a_v$ \emph{and} $a'_w > a_w$). More generally, there is no positive multi-trade of incoming edges. Note that this result extends to single donations. 
\begin{proposition}
    \label{prop:noPositive}
    In networks without default cost, there is no positive multi-trade of incoming edges.
\end{proposition}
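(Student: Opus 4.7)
My plan is to proceed by contradiction: suppose a multi-trade of incoming edges of $v$ yields $a'_v > a_v$ and $a'_w > a_w$ simultaneously. I would first establish the balance identity
\begin{equation*}
    \Delta a_v + \Delta a_w \;=\; \sum_{y \in V} (r'_y - r_y)\bigl(\ell_{(y,v)} + \ell_{(y,w)}\bigr),
\end{equation*}
which follows from two observations: the return $\rho = \sum_i \alpha_i \beta_i \ell_{e_i}$ is a pure internal transfer preserving $a^x_v + a^x_w$, and for each source $y$ the trade only re-routes the combined liability $\ell_{(y,v)} + \ell_{(y,w)}$ into $\{v,w\}$ without changing its value (the portion $\beta_i \ell_{e_i}$ is merely shifted from $(u_i,v)$ to $(u_i,w)$). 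Under the contradiction hypothesis the right-hand side must therefore be strictly positive.

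Second, I would restrict to the set $I$ of banks in default both pre- and post-trade. Since $a_y = L_y r_y$ on $I$, the vector $\Delta \veca_I = (\Delta a_y)_{y \in I}$ satisfies the linear flow equation $(\mathrm{Id} - B)\,\Delta \veca_I = b$, where $B_{y,u} = \ell_{(u,y)}/L_u$ is non-negative and sub-stochastic on $I$, and the source $b$ is supported on $\{v,w\}\cap I$ with $b_v = \rho - \sum_i \beta_i r'_{u_i}\ell_{e_i}$ and $b_w = -b_v$ (the latter entry present only when $w \in I$). Interpreting $(\mathrm{Id}-B)^{-1} = \sum_{k \ge 0} B^k$ as the Green's function of a sub-stochastic random walk on $I$ with expected visit counts $\phi(\cdot\mid\cdot)$, the classical first-passage identity $\phi(y\mid x) = \mathbb{P}_x[T_y < \infty]\,\phi(y\mid y) \le \phi(y\mid y)$ forces $\Delta a_w = b_v\bigl(\phi(w\mid v)-\phi(w\mid w)\bigr) \le 0$ whenever $b_v \ge 0$, while $\Delta a_v = b_v\bigl(\phi(v\mid v)-\phi(v\mid w)\bigr) \ge 0$; hence $\Delta a_v$ and $\Delta a_w$ cannot both be strictly positive. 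When $w \in S$ throughout, an analogous computation yields $\Delta a_w = b_v\bigl(\psi(w\mid v) - 1\bigr) \le 0$ via the absorption probability $\psi(w\mid v) \in [0,1]$ of the walk leaking out at $w$, and the symmetric case handles $v \in S$; when both $v,w$ are solvent throughout, a direct computation gives $\Delta a_v + \Delta a_w = 0$.

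The main obstacle is the case of solvency transitions across the trade, where $I$ and $B$ change discretely and the above linearization is not immediately applicable. I would address this by parameterising the trade through $t \in [0,1]$ with $(\beta_i(t),\rho(t)) = (t\beta_i, t\rho)$, exploiting the piecewise-linear dependence of the clearing state on $t$, and combining the per-piece inequalities above with a continuity argument at the finitely many breakpoints where a bank changes solvency status. An alternative is to compare the post-trade network with the auxiliary network obtained by returning $\rho$ to $w$ while retaining the structural rearrangement, and invoke the monotonicity of the Eisenberg--Noe clearing state in external assets to bound the critical quantities directly.
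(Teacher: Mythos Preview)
Your approach is genuinely different from the paper's, and the core linear-algebraic idea is sound when the default set is unchanged by the trade; the gap lies in the solvency-transition case.

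The paper's proof is a two-line reduction to prior work. Given fractions $\beta_i$, replace each edge $e_i$ by two parallel edges of liabilities $\beta_i\ell_{e_i}$ and $(1-\beta_i)\ell_{e_i}$; by proportional payments this leaves the clearing state unchanged, and the fractional trade becomes a \emph{binary} trade of the first copies. The impossibility of a positive binary multi-trade is then quoted from \cite{HoeferVW24}, where it is obtained via the debt-swap results of \cite{PappW21EC,FroeseHW23} (ultimately a contradiction to maximality of the pre-trade clearing state). No direct analysis of clearing dynamics is needed.

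Your balance identity is correct, and on a fixed default set $I$ the sub-stochastic Green's-function argument with $\phi(y\mid x)\le\phi(y\mid y)$ does force $\Delta a_v$ and $\Delta a_w$ to have opposite signs. But the parameterisation fix does not close the argument: even if on every linear piece of $t\mapsto(a_v(t),a_w(t))$ the two increments have opposite signs, the total increments over $[0,1]$ can both be positive (e.g.\ two pieces with increment pairs $(+1,-1)$ and $(-\tfrac12,+2)$ sum to $(+\tfrac12,+1)$). Piecewise sign constraints do not compose, and your sketch offers no mechanism to rule this out. The monotonicity alternative also stalls, since returning $\rho$ to $w$ changes external assets in both directions and the Eisenberg--Noe monotonicity lemma requires a componentwise increase of $\veca^x$. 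To close a direct argument you would need a global invariant that survives solvency changes---essentially what the debt-swap route supplies.
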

\begin{proof}
    A similar result holds for binary claims trades in~\cite{HoeferVW24}. The result follows from a reduction to an operation called debt swap studied in~\cite{PappW21EC,FroeseHW23}. Intuitively, if a positive trade (or debt swap) existed, we would reach a contradiction to the maximality of the clearing state in the pre-trade network $\calF$.

    We extend the result for binary trades to fractional ones. Suppose we have decided on a fraction $\beta$ for the trade. Due to proportional payments, we can replace $e$ by two multi-edges $e_1$ and $e_2$ between $u$ and $v$ with liability $\beta\ell_e$ and $(1-\beta)\ell_e$, respectively. This adjustment yields an equivalent network and clearing state, as well as the same assets of all banks. The fractional trade of $e$ in $\calF$ is a binary trade of $e_1$ in the adjusted network. Hence, the trade cannot be positive due to the result for binary trades. The argument extends to multi-trades of incoming edges by applying the network adjustment individually to all edges.
\end{proof}

Proposition~\ref{prop:noPositive} provides a formal justification to focus on trades, in which only one of the banks $v$ or $w$ does profit strictly. Recall that our motivation is to rescue a creditor $v$ in distress. As such, we focus on \emph{creditor-positive} trades, in which $a'_v > a_v$ and $a'_w = a_w$. Every creditor-positive trade weakly Pareto-improves the assets of \emph{all} banks in the network, as well as a strict improvement for $v$. This property is similar to binary trades and debt swaps~\cite{HoeferVW24,FroeseHW23,PappW21EC}. Therefore, an \emph{optimal} creditor-positive trade, which has maximal assets $a'_v$, also pointwise maximizes the assets of all banks. When trading incoming edges of $v$, our objective is to find a trade that gives every bank at least the post-trade assets of an optimal creditor-positive trade. 

When trading outgoing edges of $u$, we also strive to achieve a weak Pareto-improvement for all banks in the network and a strict improvement for at least one creditor $v_i$. Towards this end, we consider \emph{Pareto-positive} trades that guarantee $a'_w \ge a_w$, $a'_{v_i} \ge a_{v_i}$ for all $i \in [k]$, and $a'_{v_i} > a_{v_i}$ for at least one $i \in [k]$. Our objective is to find a Pareto-positive trade that maximizes the total assets of creditors $\sum_{i=1}^k a'_{v_i}$. Similarly, for multi-donations we focus on Pareto-positive donations, which achieve a weak Pareto-improvement for all banks in the network. The objective is to find a Pareto-positive multi-donation that maximizes the total assets of all banks $\sum_{b \in V} a'_b$.

In general, we can restrict attention to trades that satisfy further conditions. The next Proposition shows an equivalence that applies even in networks with default costs.

\begin{proposition}
    \label{prop:structure}
    For every fractional multi-trade of incoming edges, there is an equivalent trade for which $\rho = a_w^x$ or $\beta_i = 1$ for all $i \in [k]$. For every fractional multi-trade of outgoing edges, there is an equivalent trade for which $\sum_{i=1}^k \rho_i = a_w^x$ or $\beta_i = 1$ for all $i \in [k]$.
\end{proposition}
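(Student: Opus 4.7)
The plan is to show that whenever a trade has slack in both parameters (that is, $\rho < a_w^x$ \emph{and} some $\beta_j < 1$), one can apply a local rebalancing that pushes at least one of the two slacks to zero while leaving the post-trade clearing state and the total assets of every bank unchanged. Iterating such rebalancings terminates after at most $k$ applications in an equivalent trade satisfying the required boundary condition.

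For the multi-trade of incoming edges of $v$, let $\vecp'$ and $\vecr'$ denote the post-trade clearing state and its induced recovery rates. Pick any $j$ with $\beta_j < 1$ and consider the modification
\[
    \beta_j \;\mapsto\; \beta_j + \Delta, \qquad \rho \;\mapsto\; \rho + r'_{u_j}\,\Delta\,\ell_{e_j},
\]
where $\Delta > 0$ is chosen maximally subject to $\beta_j + \Delta \le 1$ and $\rho + r'_{u_j}\Delta\,\ell_{e_j} \le a_w^x$, so that at least one of the two bounds becomes tight. I claim that the very same recovery-rate vector $\vecr'$, with payments on $(u_j,v)$ and $(u_j,w)$ recomputed from the new split, is consistent in the modified network. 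Indeed, the total outgoing liability of $u_j$ is unchanged, so $r'_{u_j}$ remains the right outgoing rate; the payment that $v$ loses on $(u_j,v)$ is exactly $r'_{u_j}\Delta\,\ell_{e_j}$, which is precisely the increase of $v$'s external assets; the symmetric cancellation holds at $w$; and no other bank sees any change in its neighbourhood. Hence every bank's assets and equity are preserved. The multi-trade of outgoing edges is entirely parallel: we replace the single $\rho$ by the individual return $\rho_j$, increase $\rho_j$ by $r'_u\,\Delta\,\ell_{e_j}$ when $\beta_j$ is increased by $\Delta$, and the identical cancellation goes through at $v_j$ and at $w$. Iteration then proves the proposition: after each step either some $\beta_j$ becomes $1$ (strictly reducing the number of slack indices) or the external-asset budget becomes tight, in which case we stop; thus after at most $k$ steps we reach the claimed form.

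The main obstacle I expect is not the liability and asset bookkeeping but the subtler point that the rebalanced trade's post-trade clearing state is still the \emph{supremum} fixed point required by the paper's convention, and not merely some fixed point. I would handle this by observing that the rebalancing leaves the fixed-point equation at each bank invariant as a function of the recovery-rate vector: total incoming payments at $v$, $w$ and every other bank coincide in the two post-trade networks, because on $\{v,w\}$ the changes in edge payments exactly cancel the changes in external assets, and outside of $\{v,w\}$ nothing changes at all. Consequently the two post-trade networks admit identical fixed-point lattices and in particular the same supremum, so preservation of the clearing state under rebalancing follows, and the iteration above yields the desired equivalent trade.
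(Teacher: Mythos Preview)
Your core rebalancing argument is exactly the paper's: increase $\beta_j$ by $\Delta$ and simultaneously raise the return by $r'_{u_j}\Delta\ell_{e_j}$, so that the extra payment rerouted from $v$ to $w$ via $u_j$ is offset by the extra return and every bank's assets are unchanged; then iterate until one of the two slacks is exhausted. The paper argues this for a single edge and then remarks that the same step applies edge by edge for multi-trades, which matches your iteration.

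Where you go beyond the paper is in trying to certify that the preserved fixed point is still the \emph{supremum}, and this part of your argument does not hold. The fixed-point map is \emph{not} invariant as a function of the recovery-rate vector. At bank $v$, the change in external assets induced by the rebalancing is the \emph{constant} $r'_{u_j}\Delta\ell_{e_j}$ (computed from the specific clearing rate $r'_{u_j}$ of the current post-trade network), whereas the change in the edge payment received from $u_j$ is $-r_{u_j}\Delta\ell_{e_j}$, which depends on the \emph{variable} coordinate $r_{u_j}$. These two cancel only at $r_{u_j}=r'_{u_j}$; for any other recovery-rate vector the asset functions of $v$ and $w$ differ across the two networks, so the claim that ``the two post-trade networks admit identical fixed-point lattices'' is false as stated. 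What your argument does establish is that $\vecr'$ is \emph{a} fixed point of the modified network, whence the modified supremum $\hat{\vecr}'$ satisfies $\hat{\vecr}'\ge\vecr'$; the reverse inequality would need a separate argument. The paper itself simply asserts equivalence without addressing this point, so your proof is not weaker than the paper's---only your attempted strengthening misfires.
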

\begin{proof}
    First, consider a single trade with fraction $\beta$ and return $\rho$. If both $\beta < 1$ and $\rho < a_w^x$, we can increase the traded portion to $\hat{\beta} = \beta + \eta$ and the return to $\hat{\rho} = \rho + r'_u \cdot \eta$, where $r'_u$ is the post-trade recovery rate of $u$. Then $w$ increases the traded share of the edge by $\eta$. Thereby, we reroute an additional payment of $r'_u\cdot \eta$ of $u$ from $v$ to $w$. These additional payments are forwarded back to $v$ by increasing the return. Thus, the assets of $v$ and $w$ after a trade with $\hat{\beta}$ and $\hat{\rho}$ are the same as the ones after the original trade with $\beta$ and $\rho$. Moreover, the assets and recovery rates of all other banks in the network (including $u$) remain the same. By choosing $\eta = \min\{1-\beta, (a_w^x-\rho)/r'_u\}$, we obtain $\hat{\beta} = 1$ or $\hat{\rho} = a_w^x$.   

    The previous adjustment can be executed similarly for multi-trades of incoming or outgoing edges. We increase the fractions of the edges traded to $w$. Simultaneously, we increase $w$'s return to the creditor by the payments on the additional traded fraction of the edge, respectively. The net effect on all assets is 0. The adjustment terminates once we trade all edges fully (i.e., $\beta_i = 1$ for all $i \in [k]$) or the total return paid by $w$ is $a_w^x$.
\end{proof}

\section{Single Trades}
\label{sec:single}
For a given single trade consisting of an edge $e$ from $u$ to $v$ together with a buyer $w$, our goal is to compute a traded fraction $\beta$ and a haircut rate $\alpha$ that yield post-trade assets $a'_v$ for $v$ at least as high as the ones resulting from any creditor-positive trade.

\subsection{Computing a Default Hierarchy}
\label{sec:hierarchy}

Our approach is to parameterize the problem by the post-trade recovery rate $r'_v$. In particular, since we consider creditor-positive trades, we assume that $a'_w = a_w$ and, thus, $r'_w = r_w$. It will be useful to disentangle the recovery rates of $v$ and $w$ from their incoming assets. Formally, we consider the split network $\calF_s(v,w)$.
\begin{definition}
    Given a financial network $\calF$ and a subset of nodes $b_1,\ldots,b_k \subseteq V$, the \emph{split network} $\calF_s(b_1,\ldots,b_k)$ is given by replacing each bank $b_i$ by a sink node $b_{i,{in}}$ and a source node $b_{i,{out}}$. For every incoming (outgoing) edge $(b,b_i) \in E^-(b_i)$ ($(b_i,b) \in E^+(b_i)$) we introduce an incoming (outgoing) edge $(b,b_{i,{in}})$ of $b_{i,{in}}$ ($(b_{i,{out}},b)$ of $b_{i,{out}}$) with the same liability, respectively.
\end{definition}
Suppose that in $\calF_s(v,w)$ we set external assets $a^x_{v_{out}} = a_v$ and $a^x_{w_{out}} = a_w$, then the clearing state becomes equivalent to $\vecp$ in $\calF$. Moreover, any trade of any incoming edge of $v$ would only change incoming edges of $v_{in}$ and $w_{in}$ and, thus, the assets of these nodes. Now consider the vector of post-trade assets $\veca'$ and the clearing state $\vecp'$ resulting from any creditor-positive trade. Suppose that in $\calF_s(v,w)$ we assign $a^x_{v_{out}} = a'_v$ and $a^x_{w_{out}} = a'_w$ to $v_{out}$ and $w_{out}$. Then the clearing state becomes equivalent to $\vecp'$, except for the assets of $v_{in}$ and $w_{in}$. In particular, the clearing state in $\calF_s(v,w)$ will yield the same set of solvent banks as $\vecp'$.

Based on this observation, we decompose our search based on the hierarchy of insolvent banks. For computation we apply a search based on $a^x_{v_{out}}$ in $\calF_s(v,w)$, while keeping $a^x_{w_{out}} = a_w$. Note that all assets $a^x_{v_{out}} \ge L_{v_{out}} = L_v$ yield the same clearing state and, thus, the same set of banks in default. We denote the set of solvent banks for $a^x_{v_{out}} = L_v$ by $S_0$. At $a^x_{v_{out}}$ the set of solvent banks changes, since $v_{out}$ becomes insolvent. As we start to decrease $a^x_{v_{out}}$ in the interval $[a_v,L_v]$, we monotonically decrease the available assets in $\calF_s(v,w)$. Thus, the set of solvent banks will shrink. Breakpoints at which additional banks become insolvent generate what we call the \emph{default hierarchy} of $\calF_s(v,w)$. 
\begin{definition}
    A \emph{default hierarchy} of $\calF_s(v,w)$ is given by 
    \begin{itemize}
        \item breakpoints $a_v^{(0)},\ldots,a_v^{(l)}$ with $a_v^{(0)} = \infty > L_v = a_v^{(1)} \ge a_v^{(2)} \ge \ldots \ge a_v^{(l-1)} \ge a_v^{(l)} = a_v$, and 
        \item subsets $V \supseteq S^{(0)} \supsetneq S^{(1)} \supsetneq \ldots \supsetneq S^{(l)}$ 
    \end{itemize}
    such that $S^{(j)}$ is the set of solvent banks in $\calF_s(v,w)$ if and only if $a^x_{v_{out}} \in [a_v^{(j+1)}, a_v^{(j)})$ and $a^x_{w_{out}} = a_w$, for each $j \in [l]$.
\end{definition}
The solvency intervals $[a_v^{(j+1)},a_v^{(j)})$ are closed (only) on the lower end -- solvency means $a_b \in [L_b,\infty)$, which is closed on the lower end for each $b \in V$. Since each $S^{(j)}$ is a strict subset of each $S^{(j')}$ with $j' < j$, it follows that $l \le n$. We show the following result.

\newcommand{\ot}{\leftarrow}

\begin{algorithm}[t]
\DontPrintSemicolon
\caption{Computing $S^{(j)}$} \label{alg:defaultBanks} 
\SetKwInOut{Input}{Input}\SetKwInOut{Output}{Output}

\Input{Split network $\calF_s(v,w)$, bank $v$, breakpoint $a_v^{(j)}$, solvent banks $S^{(j-1)}$}
$D^{(0)} \ot V_s \setminus S^{(j-1)}$, $r \ot 0$\;
\Repeat{$D^{(r)} = D^{(r-1)}$}{
    $\vecp^{(r)} \ot$ clearing state in $\calF_s(v,w)$ with $a^x_{v_{out}} = a_v^{(j)}$ and banks $D^{(r)}$ are in default\;
    $\veca^{(r)} \ot$ vector of total assets resulting from $\vecp^{(r)}$\;
    $T_0 \ot \{v\}, X_0 \ot \emptyset, I_0 \ot \{ b \in V \mid a^{(r)}_b < L_b\}, k \ot 0$ \;
    \Repeat{$T_k = \emptyset$}{
        $k \ot k+1$\;
        $I_k \gets I_{k-1} \cup \{t \in T_{k-1} \mid a_t^{(r)} = L_t\}$ and $X_k \gets X_{k-1} \cup T_{k-1}$\;
        $T_k \gets \{w' \in V \setminus X_k \mid a_{w'}^{(r)} \leq L_{w'}$ and $\exists$ $b \in T_{k-1}$ with $(b,w')\in E^+(b)\}$ 
    }
    $D^{(r+1)} \ot D^{(r)} \cup I_k$\;
    $r \ot r+1$
}
\Output{$S^{(j)} = V_s \setminus D^{(r)}$}
\end{algorithm}

\begin{proposition}
    \label{prop:default}
    The default hierarchy of $\calF_s(v,w)$ can be computed in polynomial time.
\end{proposition}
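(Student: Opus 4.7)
The plan is to compute the breakpoints $a_v^{(1)}, a_v^{(2)}, \ldots$ and solvent sets $S^{(0)}, S^{(1)}, \ldots$ iteratively, sweeping $a^x_{v_{out}}$ downward from $\infty$ while keeping $a^x_{w_{out}} = a_w$ fixed. For the base case I set $a_v^{(0)} = \infty$, $a_v^{(1)} = L_v$, and compute $S^{(0)}$ by running the standard clearing-state procedure of~\cite{RogersV13} on $\calF_s(v,w)$ with $a^x_{v_{out}} = L_v$. This is justified because for any $a^x_{v_{out}} \ge L_v$ the node $v_{out}$ is solvent and pays its outgoing liabilities in full, so the clearing state (and hence the solvent set) is independent of $a^x_{v_{out}}$ on the entire range $[L_v, \infty)$. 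From there I alternate between computing $S^{(j)}$ from $a_v^{(j)}$ and $S^{(j-1)}$ via Algorithm~\ref{alg:defaultBanks} and deriving the next breakpoint $a_v^{(j+1)}$ from $S^{(j)}$, stopping once the next candidate breakpoint falls at or below $a_v$ (in which case the final breakpoint is set to $a_v$ by convention).

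The key structural tool is the observation that on an open interval $(a_v^{(j+1)}, a_v^{(j)})$ where the solvent set is the constant $S^{(j)}$, every recovery rate and every total asset value is an affine function of $a^x_{v_{out}}$. Indeed, solvent banks other than $v_{out}$ contribute constant payments equal to their liabilities, whereas the defaulting banks $b \in V_s \setminus S^{(j)}$ satisfy linear clearing equations $r_b L_b = \delta (a_b^x + \sum_{(u,b) \in E^-(b)} r_u \ell_{(u,b)})$, whose only nonconstant input is the outflow of $v_{out}$. That outflow is itself affine in $a^x_{v_{out}}$ (constant if $v_{out}$ is solvent, or equal to $\delta a^x_{v_{out}}/L_v$ times each outgoing liability if $v_{out}$ is in default), so Gaussian elimination on this polynomial-size linear system yields explicit affine formulas $a_b(a^x_{v_{out}})$ for every $b \in V_s$. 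Given these formulas, the next breakpoint $a_v^{(j+1)}$ is obtained by solving the linear equation $a_b(a^x_{v_{out}}) = L_b$ for a threshold $t_b$ for each $b \in S^{(j)}$ and taking $a_v^{(j+1)} = \max_{b \in S^{(j)}} t_b$. Because $a_b$ is non-decreasing in $a^x_{v_{out}}$, this value is exactly the smallest $a^x_{v_{out}}$ for which all of $S^{(j)}$ remains solvent, which also explains the ``maximal fixed point'' connection announced in the overview: at $a^x_{v_{out}} = a_v^{(j+1)}$ the bank attaining the maximum threshold sits exactly at $a_b = L_b$ and is still counted as solvent in the maximal fixed point, whereas any further decrease forces $a_b < L_b$ for that bank and triggers a cascade.

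The step I expect to be the main technical obstacle is arguing that Algorithm~\ref{alg:defaultBanks} correctly produces $S^{(j+1)}$ from $a_v^{(j+1)}$ and $S^{(j)}$: one must verify that its outer loop (repeatedly recomputing clearing states on a growing candidate default set) and inner loop (propagating marginally solvent banks along outgoing edges from the newly defaulting ones) terminate after polynomially many iterations and together capture exactly the cascade triggered by the default-cost discontinuity at the breakpoint, while also correctly handling banks sitting at $a_b = L_b$ which remain solvent only in the maximal fixed point. Once this is in place, the overall complexity follows immediately: since each iteration strictly shrinks the solvent set we execute at most $l \le n$ iterations, each performing a polynomial amount of work (one invocation of Algorithm~\ref{alg:defaultBanks}, one linear-system solve of size $O(n)$, and $O(n)$ threshold computations), so the default hierarchy can be assembled in polynomial time.
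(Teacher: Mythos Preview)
Your high-level iteration matches the paper's: compute $S^{(0)}$ at $a^x_{v_{out}}=L_v$, then alternate between Algorithm~\ref{alg:defaultBanks} (to get $S^{(j)}$ from $a_v^{(j)}$ and $S^{(j-1)}$) and a breakpoint step (to get $a_v^{(j+1)}$ from $S^{(j)}$). The substantive difference is the breakpoint step. The paper formulates LP~\eqref{eq:LPforLB}, minimizing $a^x_{v_{out}}$ subject to the fixed-point equalities with the solvent set pinned to $S^{(j)}$, and then argues via monotonicity of clearing states in external assets that the LP optimum equals $a_v^{(j+1)}$ even though an optimal LP solution need not itself be the maximal fixed point---this is where the ``minimization vs.\ maximal fixed point'' tension you mention is actually resolved. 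You instead solve the defaulting-bank equations symbolically in the parameter $a^x_{v_{out}}$ by Gaussian elimination, obtain affine formulas $a_b(\cdot)$, and set $a_v^{(j+1)}=\max_{b\in S^{(j)}}t_b$ where $t_b$ solves $a_b(t_b)=L_b$.

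Your route is more elementary and avoids an LP, but as written it has a small gap: Gaussian elimination presumes the system $(I-\delta A^\top)r=c$ for the defaulting banks is nonsingular. This holds automatically when $\delta<1$, but for $\delta=1$ the matrix can be singular (think of a cycle of defaulting banks whose liabilities go only to each other), and elimination alone does not single out the maximal fixed point among the solution family. The paper's LP bypasses this issue entirely, since its feasibility-plus-monotonicity argument never relies on uniqueness of the fixed point. You can close the gap either by restricting the elimination argument to $\delta<1$, or by replacing that step with the paper's LP. As for Algorithm~\ref{alg:defaultBanks}, the paper also only sketches its correctness, so your deferral there is in line with the paper's own level of detail.
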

\begin{proof}
We proceed iteratively. $a_v^{(0)} = \infty$ is given, and $S^{(0)}$ can be found by computing the clearing state in $\calF_s(v,w)$ with $a_{v_{out}}^x = L_v$. Suppose we have computed $a_v^{(0)},\ldots,a_v^{(j)}$ and $S^{(0)},\ldots,S^{(j-1)}$. We show how to determine the next breakpoint $a_v^{(j+1)} \le a_v^{(j)}$ as well as the set $S^{(j)} \subsetneq S^{(j-1)}$ in polynomial time. Since the hierarchy has $l+1 \le n+1$ breakpoints and subsets in total, this gives an efficient algorithm.

\paragraph{Computing $S^{(j)}$.} 
We denote by $V_s$ and $E_s$ the node and edge sets of $\calF_s(v,w)$, respectively. To identify the banks that become insolvent when $a^x_{v_{out}}$ drops below $a_v^{(j)}$, we first consider all $b \in V_s$ in $\calF_s(v,w)$ such that the clearing state resulting from $a^x_{v_{out}} = a_v^{(j)}$ puts them precisely at their solvency frontier. Let $S_f = \{ b \in V_s \mid a_b = L_b\}$ be the set of these frontier banks. Suppose $a^x_{v_{out}}$ drops below $a_v^{(j)}$. Then $b \in S_f$ becomes insolvent immediately if every change of $a_{v_{out}}^x$ propagates to $b$. In contrast, $b \in S_f$ is safe against small changes of $a_{v_{out}}^x$ if all paths from $v_{out}$ to $b$ involve a (non-frontier) solvent bank from $S^{(j)} \setminus S_f$. Let $I \subseteq S_f$ be the \emph{critical banks}, i.e., the banks $b$ such that there is at least one path from $v_{out}$ to $b$ with no bank from $S_i \setminus S_f$. \autoref{alg:defaultBanks} uses a breadth-first-search (BFS) in the inner repeat-loop to identify these critical banks. 
In addition to this routine, we use a modified version of the clearing state algorithm for networks with default cost from~\cite{RogersV13}. In our modification we assume that all critical banks (that are precisely at the solvency frontier) become insolvent and, thus, a default cost reduction is applied to their assets. Such a reduction further decreases the overall assets in $\calF_s(v,w)$ and, thus, might trigger further banks become either insolvent or reach the solvency frontier and join $S_f$. We use the outer repeat-loop in \autoref{alg:defaultBanks} to repeatedly enlarge the set of banks $D^{(r)}$ that are affected by insolvency (either directly or by being a critical bank in the current iteration) and apply a default cost reduction to their assets. Thereby, the algorithm correctly identifies all banks that slide into default once $a^x_{v_{out}}$ drops below $a_v^{(j)}$.

Since $D^{(r)}$ is growing monotonically, the outer repeat-loop requires at most $n$ iterations. The modified algorithm for the clearing state $\vecp^{(r)}$ requires to solve at most $n$ linear programs of polynomial size. The BFS in the inner repeat-loop can be implemented in time linear in the size of $\calF_s(v,w)$. Overall, the algorithm runs in polynomial time.

\paragraph{Computing $a_v^{(j+1)}$.} Given the set $S^{(j)}$, we search for the smallest value of $a_{v_{out}}^x$ such that we keep the banks in $S^{(j)}$ solvent. 
It can be found by solving LP~\eqref{eq:LPforLB} below.
\begin{equation}
    \label{eq:LPforLB}
    \begin{aligned}
        \text{Min.\ } \; &\displaystyle a^x_{v_{out}} &\\
        \text{s.t.\ } \; & a_b = a_{b}^x + \sum_{(b',b) \in E_s^-(b)}r_{b'} \cdot \ell_{(b',b)} &\forall b\in V_s\\
        & a_b \ge L_b & \forall b\in S^{(j)}\\
        & r_b = 1 & \forall b \in S^{(j)}\\
        & a_b \le L_b & \forall b\in V_s \setminus S^{(j)}\\
        & r_{b} = \delta \cdot a_b/L_{b} &\forall b\in V_s \setminus S^{(j)} \\
        & a_b \ge 0 & \forall b\in V_s
    \end{aligned}
\end{equation}

The first set of constraints defines the assets for each node as the given external assets and the incoming payments, defined by the recovery rate of the debtor node and the liability. For solvent banks, we require the solvency condition and, as a consequence, a recovery rate of 1. For insolvent banks, we assume the assets to be lower than the liabilities and a recovery rate that suffers from a default cost reduction. 

While the LP includes all fixed-point constraints for the clearing state, our objective is to \emph{minimize} $a_{v_{out}}^x$. Hence, an optimal solution of the LP represents payments, assets, and recovery rates that correspond to some fixed point, but not necessarily the clearing state which is the \emph{maximal} one. 

Nevertheless, we show that the optimal solution correctly identifies the minimal value for $a_{v_{out}}^x$. Clearly, by setting the true value $a^x_{v_{out}} = a_v^{(j+1)}$ along with the resulting clearing state, we obtain a feasible solution for LP~\eqref{eq:LPforLB}. As such, the optimal solution of the LP can only be lower than $a_v^{(j+1)}$. Suppose for contradiction that the optimal solution has value $q < a_v^{(j+1)}$. Let us replace the values of $a_b$ and $r_b$ by the ones in the clearing state resulting from $a^x_{v_{out}} = q$. This pointwise increases all recovery rates and assets, since the clearing state is the supremum of the lattice of fixed points. If it results in a feasible solution, then $a_v^{(j+1)} \le q$, a contradiction. Otherwise, we must violate a constraint $a_b \le L_b$ for some $v \in S^{(j)}$, since the clearing state increases recovery rates and assets. However, we argued above that the clearing state for $a_v^{(j+1)}$ satisfies all these constraints. When the external assets of $v_{out}$ are decreased to $q < a_v^{(j+1)}$, it is easy to see that the clearing state is non-increasing in all entries. Thus, $a_b \le L_b$ must hold for all $b \in S_{i+1}$ also in the clearing state for $q$, a contradiction. Hence, although the optimal LP solution might not be a clearing state, it correctly yields the value $a_v^{(j+1)}$.
\end{proof}

\subsection{Computing a Single Trade}
\label{sec:trade}
In this section, we show how to compute a trade 
that gives both $v$ and $w$ (and, hence, each bank in the network) at least the post-trade assets of any creditor-positive trade. Our main result is the following theorem. For the proof we construct appropriate LPs for each part of the default hierarchy to check for existence and optimize assets of a trade within that part.
\begin{theorem}
    \label{thm:singleTradeVar}
    Suppose we are given a financial network $\calF$ with default cost $\delta$, a trade edge $e = (u,v) \in E$, and a buyer $w \in V$, for which a creditor-positive trade exists. There is a polynomial-time algorithm to compute a trade that achieves at least the post-trade assets for all banks of any creditor-positive trade.
\end{theorem}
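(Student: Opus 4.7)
The plan is to exploit the default hierarchy of $\calF_s(v,w)$ from Proposition~\ref{prop:default} to partition the search space. On each interval $[a_v^{(j+1)}, a_v^{(j)})$ the solvent set is fixed as $S^{(j)}$, so the recovery rates take the affine forms $r_b = 1$ for $b \in S^{(j)}$ and $r_b = \delta a_b / L_b$ for $b \in V \setminus S^{(j)}$. For each of the $l+1 = O(n)$ levels, we set up a mathematical program in the trade variables $(\beta,\rho)$ together with the post-trade variables $(a'_b, r'_b)_{b \in V}$. The constraints encode (i) the post-trade fixed-point equations, in which the payment on the original claim becomes $(1-\beta)\ell_e r'_u$ and a new payment $\beta \ell_e r'_u$ is sent to $w$; (ii) the solvency relations induced by $S^{(j)}$ in the style of LP~\eqref{eq:LPforLB}; (iii) the interval constraint $a'_v \in [a_v^{(j+1)}, a_v^{(j)})$; (iv) the creditor-positivity condition $a'_w \ge a_w$; and (v) the trade constraints $\beta \in [0,1]$, $\rho \in [0, a_w^x]$, and $\rho \le \beta \ell_e$. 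The objective is to maximize $a'_v$.

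The program is not linear because of the bilinear products $\beta \cdot r'_u$ on the two edges out of $u$ and the bilinear constraint $\rho \le \beta \ell_e$. The crucial step will be the substitution $\sigma = \beta \cdot r'_u$: the two payments then become $(r'_u - \sigma)\ell_e$ and $\sigma \ell_e$, and since $r'_u$ is affine in the $a'_b$'s on level $j$, the fixed-point equations become linear in the new variables, while $\beta \in [0,1]$ reads $0 \le \sigma \le r'_u$. The remaining nonlinearity $\rho \le \beta \ell_e$ is eliminated by appealing to Proposition~\ref{prop:structure}: we split into two LPs, one with $\beta = 1$ (so $\sigma = r'_u$ and $\rho \in [0, a_w^x]$), and one with $\rho = a_w^x$, which turns $\rho \le \beta \ell_e$ into the linear inequality $a_w^x \cdot r'_u \le \sigma \ell_e$.

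The final algorithm solves $O(n)$ LPs and returns the trade maximizing $a'_v$. Correctness has two parts. First, the LPs must correspond to a genuine post-trade clearing state rather than some arbitrary fixed point; we expect this to follow from a maximality-of-assets argument analogous to the one used in the proof of Proposition~\ref{prop:default}, where maximizing $a'_v$ against the solvency-set constraints forces the supremum fixed point within $S^{(j)}$. Second, the returned trade must pointwise dominate the optimal creditor-positive one: any creditor-positive trade is feasible for the LP of its own level, so our optimum has $a'_v$ at least as large; combined with $a'_w \ge a_w$ and the monotonicity of the clearing state in $\calF_s(v,w)$ in the external assets at $v_{out}$ and $w_{out}$, pointwise dominance for every other bank follows. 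The main obstacle we foresee is the first part -- pinning down the exact interaction between the LP objective, the solvency inequalities, and the supremum fixed point, especially with default costs where the equalities $r_b = \delta a_b / L_b$ and the sharp interval boundaries of the default hierarchy interact delicately.
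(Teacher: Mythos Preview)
Your plan is essentially the paper's proof: iterate over the default hierarchy, use Proposition~\ref{prop:structure} to split into the two cases $\beta=1$ and $\rho=a_w^x$, linearize via the substitution $y=r'_u\beta\ell_e$ (your $\sigma\ell_e$), and handle the fixed-point-versus-clearing-state issue by exactly the maximality argument you anticipate. The only implementation details you omit are replacing the half-open interval by the closed one $[a_v^{(j+1)},a_v^{(j)}]$ (with the upper endpoint then dominated by level $j{-}1$) and solving each LP twice to allow $w$ to cross its solvency threshold; neither is a genuine obstacle.
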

\begin{proof}
    The task here is to compute the a fraction $\beta$ of edge $e$ that should be traded to $w$ (if any) and a haircut rate $\alpha$ to obtain good post-trade assets $a'_v$. Our approach uses the default hierarchy and solves an LP for each interval $[a_v^{(j+1)}, a_v^{(j)})$ and set $S^{(j)}$ in the hierarchy. As the objective, we always strive to maximize the post-trade assets $a'_v$.

    For the banks $V \setminus \{v,w\}$ we only include fixed-point and default cost constraints. Since $S^{(j)}$ is a set of solvent banks from $\calF_s(v,w)$, it contains banks $v_{in}, w_{in}$ as well as, potentially, $v_{out}$ and $w_{out}$, but neither $v$ nor $w$. We disregard these banks from consideration here, focus on $S^{(j)} \cap V$, and deal with $v$ and $w$ separately below. Given the set $S^{(j)}$ of solvent banks, all our LPs feature the following basic fixed-point constraints for the post-trade assets $\veca'$ and recovery rates $\vecr'$.
    
    \begin{equation}
        \label{eq:basicFP}
        \begin{aligned}
            a_{b}' &= a_{b}^{x\prime} + \sum_{(b',b) \in E^-(b)} r'_b \cdot \ell_{(b',b)} &\forall b\in V \setminus\{v,w\} \\
            a'_b &\ge L_b & \forall b \in S^{(j)} \cap V \\
            r'_b &\leq a'_b/L_b & \forall b \in V\\ 
            r'_b &= 1 & \forall b \in S^{(j)} \cap V\\ 
            r'_b &\leq \delta \cdot a'_b/L_b & \forall b \in V \setminus (S^{(j)} \cup \{v,w\})\\
            a_b^{x\prime} &= a_b^x & \forall b\in V \setminus \{v,w\} \\
            r_b' &\in [0,1] & \forall b\in V &
        \end{aligned}
    \end{equation}
   
    Moreover, we require that the post-trade assets satisfy
    \begin{equation}
        \label{eq:moreThanCP}
        a_w' \geq a_w \hspace{1cm} \text{ and } \hspace{1cm}  a_v' \in [a_v^{(j+1)},a_v^{(j)}].
    \end{equation}
    The first constraint requires a Pareto-improvement for the assets of $w$. It also implies that the trade we compute might not necessarily be \emph{creditor-positive} itself -- it can also be a positive one. In the default hierarchy we have enumerated all possible scenarios for any creditor-positive trade in terms of solvent banks and assets for $v$. As such, we set up and solve an LP for each of these scenarios $j \in [l]$ with the objective to maximize $a'_v$, and keep the best solution. In this way, the resulting trade will deliver the (weak) Pareto-improvement in post-trade assets for $v$ and $w$ (and, hence, for all banks in the network) over all creditor-positive trades.

    The second constraint uses the closed interval. Consequently, we might find a trade with $a'_v = a_v^{(j)}$. In this case, we consider the same trade with solvent banks $S^{(j-1)}$ instead of $S^{(j)}$. Then a strict superset of banks are solvent, which can only create (weakly) better assets for $v$ and $w$ (and, hence, all banks in the network). As such, the trade is Pareto-dominated by the solution of our LP for part $j-1$. 

    Now consider the following set of constraints for the trade variable $\beta$, haircut rate $\alpha$, the result on the external assets, and fixed-point conditions for $v$ and $w$:
    \begin{equation}
    \label{eq:directLP}
            \begin{aligned}
            \alpha, \beta &\in [0,1]\\
            \alpha \beta \ell_e &\le a_w^x\\
            a_w^{x\prime} &= a_w^x - \alpha \beta \ell_e \\
            a_v^{x\prime} &= a_v^x + \alpha \beta \ell_e \\
            r_{v}' &= \delta_v \cdot a_v'/L_{v}\\
            r_{w}' &= \delta_w \cdot a_{w}'/L_{w} \\
            \end{aligned} \hspace{2cm} 
            \begin{aligned}
            a_{w}' &= a_{w}^{x\prime} + \left(\sum_{(b,w) \in E^-(w)} r'_b \cdot \ell_{(b,w)}\right) + r_u'\beta\ell_e \\
            a_{v}' &= a_{v}^{x\prime} + \left(\sum_{(b,v) \in E^-(v) \setminus \{e\}} r'_b \cdot \ell_{(b,v)}\right) + r_u'\ell_e - r_u'\beta\ell_e \\
        \end{aligned}
    \end{equation}
    where $\delta_v = 1$ for $i = 0$ when $a'_v \in [L_v,\infty)$ and $\delta_v = \delta$ otherwise. Similarly, $\delta_w = 1$ if $a_w \geq L_w$, and $\delta_w = \delta$ otherwise.

    The constraints ensure $\alpha$ and $\beta$ are feasible fractions, that the return can be paid from the external assets of $w$, and define the new external assets after the return payment, the new total assets after the fractional trade of incoming edge $e$, as well as the resulting recovery rates depending on solvency of $v$ and $w$.

    The challenge is that the constraints for the assets of $a'_v$ and $a'_w$ in~\eqref{eq:directLP} are not linear. Based on our characterization in Proposition~\ref{prop:structure}, we consider two cases with $\beta = 1$ and $\alpha\beta\ell_e = \rho = a_w^x$. First, suppose $\beta = 1$, then the constraints~\eqref{eq:directLP} become linear. Thus, we solve the LP to maximize $a'_v$ subject to constraints in~\eqref{eq:basicFP}, \eqref{eq:moreThanCP}, \eqref{eq:directLP}. Second, if the return $\rho = \alpha\beta\ell_e = a_w^x$, then the external assets become $a_w^{x\prime} = 0$ and $a_v^{x\prime} = a_v^x + a_w^x$. The constraints for $a'_v$ and $a'_w$ are not linear. We consider two subcases here. First, suppose $r'_u = 0$, then we get linear constraints and solve the resulting LP to maximize $a'_v$ subject to constraints in~\eqref{eq:basicFP},\eqref{eq:moreThanCP},\eqref{eq:directLP}.
    
    Second, if $r'_u > 0$, we apply a substitution: Let $y = r'_u \beta \ell_e$, then $\beta = y/(r_u' \ell_e)$ and, hence, $\alpha = r'_u a_w^x$. By combining the equations for external and total assets of $w$ and $v$, respectively, we transform~\eqref{eq:directLP} into equivalent linear constraints
    \begin{equation}
        \label{eq:varLP}
        \begin{aligned}
            y &\in [0, r'_u \ell_e]\\
            a_{w}' &= \left(\sum_{(b,w) \in E^-(w)} r'_b \cdot \ell_{(b,w)}\right) + y \\
            a_{v}' &= a_v^x + a_w^x + \left(\sum_{(b,v) \in E^-(v) \setminus \{e\}} r'_b \cdot \ell_{(b,v)}\right) + r_u' \ell_e - y\\
            r_{v}' &= \delta_v \cdot a_v'/L_{v}\\
            r_{w}' &= \delta_w \cdot a_{w}'/L_{w} \\
        \end{aligned}
    \end{equation}
    As such, we solve the resulting LP to maximize $a'_v$ subject to constraints in~\eqref{eq:basicFP},\eqref{eq:moreThanCP},\eqref{eq:varLP}.
    
    Note that since $a'_w \ge a_w$, an optimal solution might have $a'_w \ge L_w$. In fact, such a solution might only evolve if we set $\delta_w = 1$ in the first place. As such, if $a_w < L_w$, we solve the LPs in both cases twice -- once with $\delta_w = \delta$ and once with $\delta_w = 1$. In the latter case, we check if the optimal solution satisfies $a'_w \ge L_w$. If so, we obviously prefer the latter solution (since it Pareto-dominates the former in terms of assets), otherwise we keep the former.

    We do not include an explicit optimization to obtain payments, assets, and recovery rates from the \emph{clearing state}. Rather, we only include fixed-point constraints. Let us argue that this is unproblematic. The only true upper-bound constraint in our LPs will be $a'_v \le a_v^{(j)}$. Now suppose we compute an optimal LP solution with $a'_v < a_v^{(j)}$. If it does not represent the clearing state (i.e., the maximal fixed point), we can replace all payments, assets and recovery rates by the ones in the clearing state. This does not violate any fixed-point constraints and can only increase $a'_v$ and $a'_w$. Thus, w.l.o.g.\ we can assume an optimal LP solution that uses clearing state payments. Conversely, consider an optimal solution with $a'_v = a_v^{(j)}$. A clearing state might violate the upper-bound constraint. However, as argued above, in this case we rather resort to part $i-1$ in the hierarchy which can only improve the conditions for all banks.

    It is possible that the LPs in both cases are infeasible, which proves that in the part $j \in [l]$ of the default hierarchy no creditor-positive trade can exist.
\end{proof}
By Proposition~\ref{prop:noPositive}, networks without default cost allow no positive trades. Then the trade computed by the algorithm must itself be a creditor-positive one. As such, we can compute an \emph{optimal} creditor-positive trade in the network, or decide that no such trade exists.
\begin{corollary}
    \label{cor:singleTradeVar}
    Given a financial network $\calF$ without default cost, a trade edge $e = (u,v) \in E$, and a buyer $w \in V$, there is a polynomial-time algorithm to compute an optimal creditor-positive trade or decide that none exists.
\end{corollary}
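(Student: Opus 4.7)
The plan is to invoke Theorem~\ref{thm:singleTradeVar} as a black box and then inspect the output to distinguish the two cases of the corollary. First I would run the algorithm from Theorem~\ref{thm:singleTradeVar} on the given instance $(\calF, e, w)$. It runs in polynomial time and produces either (i) a trade $T^\ast$ whose post-trade asset vector $\veca'$ bankwise weakly dominates the post-trade assets of every creditor-positive trade and in particular satisfies $a'_w \ge a_w$, or (ii) the certificate that every LP it constructs is infeasible, in which case there can be no creditor-positive trade at all and I would simply report that fact.

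Given such a $T^\ast$, I would then compare $a'_v$ with the pre-trade value $a_v$. If $a'_v > a_v$, I claim $T^\ast$ is already an optimal creditor-positive trade. The key step here is that a single trade of $e$ is a special case of a multi-trade of incoming edges of $v$, so in the no-default-cost setting Proposition~\ref{prop:noPositive} forbids $a'_w > a_w$. Combined with the guarantee $a'_w \ge a_w$, this forces $a'_w = a_w$, so $T^\ast$ is creditor-positive; bankwise weak dominance then upgrades this to optimality, since $a'_v \ge a''_v$ for every creditor-positive trade with creditor asset $a''_v$. If instead $a'_v = a_v$, no creditor-positive trade can exist, because any hypothetical creditor-positive trade $T$ would satisfy $a''_v > a_v$ and weak dominance would give $a'_v \ge a''_v > a_v$, contradicting $a'_v = a_v$.

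The entire procedure consists of one call to the algorithm of Theorem~\ref{thm:singleTradeVar} followed by a single comparison, so the total running time is polynomial. The only non-routine step is the appeal to Proposition~\ref{prop:noPositive}, which is needed to rule out the positive-trade alternative that Theorem~\ref{thm:singleTradeVar} otherwise permits in networks with default cost; ruling this out is exactly what the no-default-cost hypothesis of the corollary provides, and it is what bridges the gap between the bankwise weak-dominance guarantee of the theorem and the stronger notion of optimality among creditor-positive trades required by the corollary.
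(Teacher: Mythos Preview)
Your argument is correct and follows the same route as the paper: run the algorithm of Theorem~\ref{thm:singleTradeVar}, then invoke Proposition~\ref{prop:noPositive} to rule out a strictly positive trade, which forces $a'_w = a_w$ and hence makes the output creditor-positive and, by the dominance guarantee, optimal. Your explicit case split on $a'_v > a_v$ versus $a'_v = a_v$ (and the infeasibility branch) is a slightly more careful unpacking than the paper gives, but it is the same idea.
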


\section{Multi-Trades of Incoming Edges}
\label{sec:incoming}
In this section, we generalize the results from the previous section to multi-trades of incoming edges. Recall that for a multi-trade of incoming edges, we are given creditor $v$ and buyer $w$. We denote $v$'s incoming edges by $e_1,e_2,\ldots,e_k$ and the corresponding debtors by $u_1,u_2,\ldots,u_k$. Our goal, again, is to find a trade that is at least as good (weakly Pareto-improving for all banks) as the optimal creditor-positive trade. Our main result is the following theorem.
\begin{theorem}
    \label{thm:incoming}
    Suppose we are given a financial network $\calF$ with default cost $\delta$, a creditor $v$, a buyer $w \in V$, for which a creditor-positive multi-trade of incoming edges exists. There is a polynomial-time algorithm to compute a trade that achieves at least the post-trade assets for all banks of any creditor-positive multi-trade of incoming edges.
\end{theorem}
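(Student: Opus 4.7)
The plan is to extend the single-trade algorithm of Theorem~\ref{thm:singleTradeVar} by enumerating over the default hierarchy from Section~\ref{sec:hierarchy} and exploiting a structural characterization that an optimal multi-trade contains at most one fractional edge. A useful observation up front: in the split network $\calF_s(v,w)$, the trading fractions $\beta_i$ only redirect the incoming liabilities of the sinks $v_{in}, w_{in}$, so for fixed $a^x_{v_{out}} = a'_v$ and $a^x_{w_{out}} = a_w$ the clearing state (and hence each debtor recovery rate $r'_{u_i}$) is independent of the $\beta_i$ and is affine in $a'_v$ within a single level of the default hierarchy.

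By Proposition~\ref{prop:structure} I would split into the sub-case where all $\beta_i = 1$ (no fractional edge) and the sub-case $\sum_i \rho_i = a_w^x$. In the latter, $a^{x\prime}_v = a^x_v + a_w^x$, so maximizing $a'_v$ subject to the fixed-point equation for $v$ is equivalent to minimizing $\sum_i r'_{u_i} \beta_i \ell_{e_i}$ under $\sum_i \beta_i \ell_{e_i} \ge a_w^x$, $\beta_i \in [0,1]$, plus a Pareto lower bound on $a'_w$. Treating $r'_{u_i}$ as constants (valid once the solvency set and the sub-interval of $a'_v$ are fixed), the dominant constraint is the fractional-knapsack one: order the edges by increasing $r'_{u_i}$ and trade fully in that order until the threshold $a_w^x$ is met, making at most one edge fractional. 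A short exchange argument handles the extra Pareto constraint on $a'_w$: whenever two edges are simultaneously fractional, shifting trading mass from a higher-$r'_{u_i}$ edge to a lower-$r'_{u_i}$ edge preserves both the return budget and $a'_w$ while strictly increasing $a'_v$.

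Since each $r'_{u_i}$ is affine in $a'_v$ within a level of the default hierarchy, the sorted order of $\{r'_{u_i}\}_{i=1}^k$ changes at most $\binom{k}{2}$ times as $a'_v$ ranges over the level. I would therefore refine each level by these $O(k^2)$ order-switch points into $O(nk^2)$ sub-intervals on each of which both the solvency set and the order of the $r'_{u_i}$ are invariant. Within a sub-interval, I enumerate over the position $i^*$ of the unique fractional edge, which determines the set of fully-traded edges (those with smaller rate) and the set of untraded ones. I then set up an LP analogous to~\eqref{eq:basicFP},~\eqref{eq:moreThanCP} and the multi-edge analog of~\eqref{eq:directLP}, with $\beta_j$ fixed to $0$ or $1$ for $j \ne i^*$. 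The only remaining bilinearity is the product $r'_{u_{i^*}} \beta_{i^*}$, which is linearized by the substitution $y_{i^*} = r'_{u_{i^*}} \beta_{i^*} \ell_{e_{i^*}}$ exactly as in~\eqref{eq:varLP}. Solving all $O(nk^3)$ LPs and returning the best trade yields the algorithm and completes the polynomial-time bound.

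The main obstacle is proving the structural ``at-most-one-fractional-edge'' claim, in particular verifying that the Pareto lower bound on $a'_w$ does not force a second fractional edge once the knapsack-style reordering is applied. Once this is established, the fact that $\beta_i$ does not feed back into the $r'_{u_i}$ (thanks to the sink/source separation in the split network) makes the rest a direct, albeit more intricate, extension of the single-trade argument.
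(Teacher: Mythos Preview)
Your overall architecture matches the paper's proof: enumerate levels of the default hierarchy, refine each level by the $O(k^2)$ crossover points of the affine functions $r'_{u_i}(a'_v)$, enumerate the index $i^*$ of the single fractional edge, and solve the resulting linearized LP via the substitution $y = r'_{u_{i^*}}\beta_{i^*}\ell_{e_{i^*}}$. The complexity count $O(nk^3)$ LPs is also the paper's.

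The gap is in your exchange argument for the ``at most one fractional edge'' claim. You assert that shifting traded liability from a higher-$r'_{u_i}$ edge to a lower-$r'_{u_i}$ edge (keeping $\sum_i \beta_i\ell_{e_i}$ fixed so the return budget is preserved) ``preserves $a'_w$ while strictly increasing $a'_v$''. It does not preserve $a'_w$: that shift strictly decreases $\gamma = \sum_i r'_{u_i}\beta_i\ell_{e_i}$, which is exactly the payment $w$ receives from the debtors, so the direct effect on $a'_w$ is negative. The indirect feedback via a larger $a'_v$ need not compensate (e.g., if no path from $v$ to $w$ carries positive marginal flow). So your exchange can push $a'_w$ below $a_w$, and you have not ruled out that the binding constraint $a'_w\ge a_w$ forces a second fractional edge.

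The paper closes this gap with an \emph{equivalence} argument (Lemma~\ref{lem:greedyVar}) rather than an improvement argument: given any multi-trade, keep $\gamma$ \emph{unchanged} and re-realize it greedily by filling edges in order of increasing $r'_{u_i}$. Since the greedy order concentrates on low-rate edges, it needs at least as much total traded liability $\sum_i \hat\beta_i\ell_{e_i} \ge \sum_i \beta_i\ell_{e_i}$ to hit the same $\gamma$, so one can choose a smaller common haircut $\hat\alpha \le \alpha$ with $\hat\alpha\sum_i \hat\beta_i\ell_{e_i} = \rho$. Because both $\gamma$ and $\rho$ are identical, every bank's assets (including $a'_w$) are identical to the original trade. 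Hence the optimal creditor-positive trade has a greedy representative with at most one fractional edge, and the $a'_w$ constraint carries over automatically. Replace your exchange step by this equivalence and the rest of your outline goes through.
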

For networks without default cost, this again allows to compute the optimal creditor-positive trade or decide that none exists.
\begin{corollary}
    \label{cor:singleTradeFix}
    Given a financial network $\calF$ without default cost, a creditor $v \in V$, and a buyer $w \in V$, there is a polynomial-time algorithm to compute an optimal creditor-positive multi-trade of incoming edges or decide that none exists.
\end{corollary}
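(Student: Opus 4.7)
My plan is to reduce the multi-trade problem to a polynomial collection of instances that can each be attacked with the single-trade LP machinery from the proof of Theorem~\ref{thm:singleTradeVar}. The starting point is a structural characterization: I claim that among optimal multi-trades of incoming edges there is one with at most a single edge $e_i$ having $\beta_i \in (0,1)$, all others satisfying $\beta_j \in \{0,1\}$. To justify this, I would fix the solvent set and recovery rates (which is legitimate once we commit to a level of the default hierarchy below) and observe that the assets $a'_v$ and $a'_w$ become affine in each $\beta_i$ and each $\alpha_i \beta_i$. For any two simultaneously fractional edges $e_i, e_j$, I would perform a local exchange argument that shifts traded fraction from the debtor with lower payment $r'_{u_i}\ell_{e_i}$ towards the one with higher payment (compensating the return $\rho$ so that $a'_w$ is preserved), using Proposition~\ref{prop:structure} to ensure either $\beta = 1$ on the gaining edge or $\rho = a_w^x$ is saturated. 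Iterating the exchange removes all but one fractional edge and (weakly) increases $a'_v$.

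Next I would invoke Section~\ref{sec:hierarchy} to compute the default hierarchy of $\calF_s(v,w)$ in polynomial time, giving intervals $[a_v^{(j+1)}, a_v^{(j)})$ with associated solvent sets $S^{(j)}$. On each level $j$ the solvency pattern is fixed, so the fixed-point equations in~\eqref{eq:basicFP} force every recovery rate $r'_{u_i}$ to be an affine function of $a'_v$. This is the key structural benefit of the hierarchy: the payment ordering of the debtors $u_1,\dots,u_k$ changes only at crossings of these affine functions, giving at most $O(k^2)$ sub-intervals per level and $O(nk^2)$ in total, on each of which the induced order on $r'_{u_i}\ell_{e_i}$ is constant. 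By the single-fractional-edge characterization, an optimal trade on such a sub-interval is fully specified, up to one continuous degree of freedom, by choosing the split position in this order: the high-payment prefix is taken with $\beta = 1$, the low-payment suffix with $\beta = 0$, and the boundary edge is the unique fractional one. Hence I iterate over the $O(nk^3)$ triples (level, sub-interval, split position).

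For each such triple I set up an LP exactly in the spirit of~\eqref{eq:basicFP}, \eqref{eq:moreThanCP}, \eqref{eq:directLP}: the assets of $v$ and $w$ are written as fixed-point sums including the fully- and non-traded edges, plus the single fractional contribution $r'_{u_m}(1-\beta_m)\ell_{e_m}$ routed to $v$ and $r'_{u_m}\beta_m\ell_{e_m}$ routed to $w$, minus the aggregate return. The nonlinearity again stems from products $r'_{u_m}\beta_m$ and $\alpha_m\beta_m$, and I resolve it with the two-case split from Proposition~\ref{prop:structure} and the substitution $y = r'_{u_m}\beta_m\ell_{e_m}$ used in~\eqref{eq:varLP}, treating $w$'s solvency indicator $\delta_w$ by solving both candidate LPs exactly as in the single-trade proof. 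Keeping the pointwise best solution across all triples yields the claimed weak Pareto-improvement over any creditor-positive multi-trade, since the hierarchy together with the ordering enumeration exhausts every structural template a creditor-positive trade can realize.

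The main obstacle I anticipate is the exchange argument in the first paragraph: it must handle the coupling between $\beta$, $\alpha$, and the possibly insolvency-dependent recovery rates $r'_v, r'_w$ while preserving $a'_w \ge a_w$ and not forcing any bank out of $S^{(j)}$. The proof therefore has to be carried out \emph{inside} a fixed level of the default hierarchy, where recovery rates are affine and solvency is fixed, so that the exchange becomes a clean convex-combination argument. Once that is in place, everything else is an enumeration over the hierarchy and orderings plus an application of the LP machinery of Theorem~\ref{thm:singleTradeVar}.
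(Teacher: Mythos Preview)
Your overall plan is the paper's plan: compute the default hierarchy, observe that on each level the recovery rates $r'_{u_i}$ are affine in $a'_v$, enumerate the finitely many orderings, and for each ordering and each split position reduce to a single-trade LP as in Theorem~\ref{thm:singleTradeVar}. That part is fine.

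The gap is in your single-fractional-edge lemma. You sort by \emph{payment} $r'_{u_i}\ell_{e_i}$ and push mass toward the \emph{high-payment} edge; the paper's Lemma~\ref{lem:greedyVar} sorts by \emph{recovery rate} $r'_{u_i}$ and fully trades the \emph{low-recovery-rate} prefix. This is not cosmetic. The exchange you describe keeps the payment $\gamma=\sum_i r'_{u_i}\beta_i\ell_{e_i}$ routed to $w$ fixed, hence preserves both $a'_v$ and $a'_w$; the only question is whether the side constraint $\rho\le\sum_i\beta_i\ell_{e_i}$ (i.e.\ $\alpha\le 1$) survives. For fixed $\gamma$, the traded liability $\sum_i\beta_i\ell_{e_i}$ is \emph{maximized} precisely by concentrating on the edges with smallest $r'_{u_i}$, so the paper's direction is the one guaranteed to stay feasible. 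Your direction can shrink the traded liability below $\rho$ and break feasibility: take two edges with $\ell_1=\ell_2=10$, $r'_{u_1}=0.1$, $r'_{u_2}=1$, original trade $\beta_1=1,\beta_2=0,\rho=10$; your canonical form would need $\beta_2=0.1,\beta_1=0$ to match $\gamma=1$, but then $\sum_i\beta_i\ell_{e_i}=1<\rho$. Since recovery-rate order and payment order can differ (e.g.\ $\ell_1=10,\ell_2=1,\ell_3=10$ with $r'_{u_1}<r'_{u_2}<r'_{u_3}$ but $p_2<p_1<p_3$), your enumeration over payment-order prefixes can miss the actual canonical form, and hence the optimum.

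The fix is exactly what the paper does: sort by $r'_{u_i}$ and trade the low-recovery prefix fully. With that correction, your outline coincides with the paper's proof of Theorem~\ref{thm:incoming}, and Corollary~\ref{cor:singleTradeFix} follows from Proposition~\ref{prop:noPositive} as you implicitly intend.
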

We show the result using a careful enumeration of parameters and repeated application of Theorem~\ref{thm:singleTradeVar}. More formally, given any multi-trade with incoming edges, we can assume w.l.o.g.\ $\rho = \sum_{i \in [k]} \alpha_i \beta_i \ell_{e_i} = \alpha \cdot \sum_{i \in [k]} \beta_i \ell_{e_i}$, i.e., $\alpha_i = \alpha$ for a suitably chosen $\alpha \in [0,1]$. We use this insight to show that we can restrict attention to trades that have a greedy structure. They can be interpreted as solving a fractional knapsack problem (c.f.~\cite{HoeferVW24} for a similar result for binary trades). 
\begin{lemma}
    \label{lem:greedyVar}
    For every multi-trade of incoming edges, consider the edges $i \in [k]$ sorted in order of non-decreasing recovery rate $r'_{u_i}$. There is an equivalent trade with a number $i' \in [k]$ and traded fractions $\hat{\vecBeta}$ such that $\beta_i = 1$ for $i < i'$ and $\beta_i = 0$ for $i > i'$.
\end{lemma}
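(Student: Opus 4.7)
The plan is to exhibit, for any multi-trade $(\vecBeta,\alpha)$ (where I may assume a uniform haircut rate by the averaging observation already invoked at the start of this section), an equivalent greedy multi-trade $(\hat{\vecBeta},\hat{\alpha})$ that induces the same post-trade clearing state. The key observation is that only two aggregates of the trade enter the equations determining $a'_v$ and $a'_w$: the return $\rho=\alpha\sum_i\beta_i\ell_{e_i}$ (transferred as external assets), and the total payment $T=\sum_i\beta_i r'_{u_i}\ell_{e_i}$ that the traded fractions deliver to $w$ (the complementary payment reaching $v$ from the untraded fractions is $\sum_i r'_{u_i}\ell_{e_i}-T$). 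Consequently, if $(\hat{\vecBeta},\hat{\alpha})$ satisfies $\hat{\rho}=\rho$ and $\hat{T}=T$ while the debtor rates $r'_{u_i}$ are held fixed, the total assets $a'_v, a'_w$ (and thus $r'_v, r'_w$ and the remainder of the clearing state in $\calF$) stay invariant; this keeps the $r'_{u_i}$ themselves invariant, closing the self-consistency loop.

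The construction of $\hat{\vecBeta}$ is fractional knapsack. Reindex so that $r'_{u_1}\le\ldots\le r'_{u_k}$ and write $F(j)=\sum_{i\le j}r'_{u_i}\ell_{e_i}$. Since $F$ is non-decreasing with $F(k)\ge T$, pick $i'\in[k]$ with $F(i'-1)\le T\le F(i')$. Set $\hat{\beta}_i=1$ for $i<i'$, $\hat{\beta}_{i'}=(T-F(i'-1))/(r'_{u_{i'}}\ell_{e_{i'}})$ (or any value in $[0,1]$ when the denominator vanishes, which happens precisely when $F(i'-1)=F(i')$), and $\hat{\beta}_i=0$ for $i>i'$. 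By construction $\hat{T}=T$. To enforce $\hat{\rho}=\rho$, put $\hat{\alpha}=\rho/\sum_i\hat{\beta}_i\ell_{e_i}$, with the convention $\hat{\alpha}=0$ whenever $\rho=0$.

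The remaining step is to verify $\hat{\alpha}\in[0,1]$. Here I would invoke the fractional-knapsack optimality: sorting by non-decreasing $r'_{u_i}$ places the largest value-to-weight ratio $1/r'_{u_i}$ first, so a standard swap argument shows that the greedy solution maximises $\sum_i\beta_i\ell_{e_i}$ over $\vecBeta\in[0,1]^k$ subject to $\sum_i\beta_i r'_{u_i}\ell_{e_i}=T$. In particular, $\sum_i\hat{\beta}_i\ell_{e_i}\ge\sum_i\beta_i\ell_{e_i}=\rho/\alpha\ge\rho$, which yields $\hat{\alpha}\le 1$; non-negativity is immediate.

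The main obstacle I expect is the circular character of the sort order: $r'_{u_i}$ is a feature of the post-trade clearing state, yet we use those values to design the alternative trade. I would resolve this exactly as sketched in the first paragraph — treat the rates of the original trade as fixed data, construct $(\hat{\vecBeta},\hat{\alpha})$ from them, and then verify that the assumed clearing state is indeed a fixed point of the alternative post-trade network (invoking, as in the proof of Proposition~\ref{prop:structure}, that preserving $a'_v$ and $a'_w$ forces the remainder of the clearing state, and hence the $r'_{u_i}$ of the alternative network agree with the values we started from).
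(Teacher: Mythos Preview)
Your proposal is correct and follows essentially the same approach as the paper: both identify the two relevant aggregates (the payment $T$ routed to $w$ and the return $\rho$), construct the greedy threshold profile $\hat{\vecBeta}$ matching $T$, and then rescale the haircut rate using the fractional-knapsack observation that the greedy profile maximises $\sum_i\hat{\beta}_i\ell_{e_i}$ subject to the payment constraint, which forces $\hat{\alpha}\le\alpha\le1$. Your treatment is somewhat more explicit than the paper's on the circularity of the $r'_{u_i}$ and on the degenerate case $r'_{u_{i'}}=0$, but the argument is the same.
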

\begin{proof}
Consider any multi-trade of incoming edges with the vector of traded fractions $\vecBeta$ and a haircut rate of $\alpha$. The incoming edges of $v$ generate a total incoming payment of $\tau = \sum_{i \in [k]} r'_{u_i} \ell_{e_i}$. From this total payment, $w$ receives $\gamma = \sum_{i \in [k]} r'_{u_i} \beta_i \ell_{e_i}$, and $v$ receives $\tau - \gamma$. Also, $v$ gets $\rho$ as return payment from $w$ such that $\rho = \alpha \sum_{i \in [k]} \beta_i\ell_{e_i} \le a_w^x$. 

Now construct an adjusted trade with $\hat{\vecBeta}$ and $\hat{\alpha}$ that satisfies the properties of the lemma, i.e., with $i' \in [k]$ such that edges $i < i'$ are traded fully with $\beta_i' = 1$, and edges $i > i'$ not at all with $\beta'_i = 0$. Edge $i'$ is the only one that is potentially traded fractionally, i.e., $\hat{\beta}_{i'} \in [0,1]$. We choose $i'$ and $\hat{\beta}_{i'}$ such that 
\[
        \sum_{i \in [k]} r_{u_i}' \hat{\beta}_i \ell_{e_i} = \sum_{i < i'} r_{u_i}' \ell_{e_i} + r_{u_{i'}} \hat{\beta}_{i'} \ell_{e_{i'}} = \gamma,
\]
so the adjusted trade has the same incoming payments for $v$ and $w$ from the debtors $u_i$. Moreover, $\sum_{i \in [k]} \hat{\beta}_i \ell_{e_i} \ge \sum_{i \in [k]} \beta_i \ell_{e_i}$, since the adjusted trade focuses on edges with smallest recovery rates. Hence, it is required to trade more total liability to obtain the same incoming payments for $w$ from the $u_i$. Therefore, we can choose $\alpha \in [0,1]$ such that 
\[
    \hat{\alpha} \cdot \sum_{i \in [k]} \hat{\beta}_i \ell_{e_i} = \rho, \hspace{1cm} \text{where } \hat{\alpha} \le \alpha.
\]
The adjusted trade is equivalent to the original one, i.e., has the same total payments from the $u_i$ to $v$ and $w$ and the same return. It trades edges with smallest recovery rates.
\end{proof}

Using this lemma, we present a proof of our main result.
\begin{proof}[Proof of Theorem~\ref{thm:incoming}]
Consider the default hierarchy. In the split network $\calF_s(v,w)$, a move of incoming edges from $v$ to $w$ changes only the assets of $v_{in}$ and $w_{in}$ but not the clearing state or asset levels of other banks. Hence, an optimal creditor-positive multi-trade of incoming edges has an equivalent representation with the same assets and payments in $\calF_s(v,w)$.

An algorithmic consequence of Lemma~\ref{lem:greedyVar} is the following. Suppose there is a \emph{structure oracle} that returns for an optimal creditor-positive trade, (1) the relevant part $j$ of the default hierarchy for the post-trade assets of $v$, and (2) the ordering of edges $e_i$ w.r.t.\ recovery rates $r'_{u_i}$. Then we can enumerate all $k$ possibilities for the index $i'$. For each $i' \in [k]$, we assign all edges $e_i$ with $i < i'$ fully to $w$ and use the algorithm from the previous section to compute a single trade for edge $e_{i'}$. More formally, for each $i' \in [k]$, we set up the LPs for part $j$ of the default hierarchy to compute a single trade of edge $e_{i'}$ as in Theorem~\ref{thm:singleTradeVar} above, with the adjustment that all edges $e_i$ with $i < i'$ are assigned fully to $w$. Unless $i' = k$, Proposition~\ref{prop:structure} shows that we only need to check the case that $\rho = a_w^x$. For the correct choice of the fractional edge $e_{i'}$, the best of the LP will yield a multi-trade of incoming edges that Pareto-dominates the optimal creditor-positive one.

In the remainder of the proof, we show how to enumerate a polynomial number of combinations that include the correct output of our structure oracle. Thus, by applying the approach from the previous paragraph to all combinations, we obtain an efficient algorithm and prove the theorem.

Consider $\calF_s(v,w)$ and each part of the default hierarchy individually. Suppose we vary the assets $a'_v = a^x_{v_{out}}$ within an interval $a'_v \in [a_v^{(j+1)},a_v^{(j)})$. Then there are no (in-)solvency events. As such, each bank $b \in V \setminus (S^{(j)} \cup \{v,w\})$ is insolvent and has outgoing assets $\delta a_b$, i.e., linear in the total assets of $b$, whereas $b \in S^{(j)}$ is solvent and acts as a sink for the incoming assets with outgoing assets $L_b$. Instead of a common sink node $v_{in}$ let us consider an individual sink node $v_i$ for each edge $e_i$. It is a well-known property of the Eisenberg-Noe model (see, e.g.~\cite{FroeseHW23,PappW21EC}) that -- as long as no bank becomes solvent -- increasing the external assets of a source node leads to a linear change in total assets at every sink node. It is easy to see that this property continues to hold for networks with default cost. As such, the assets of each $v_i$ increase linearly in $a_v'$ within part $j$ of the hierarchy. Since $a_{v_i} = r_{u_i} \ell_{e_i}$, the recovery rate $r_{u_i}$ also grows linearly in $a_v'$.

As such, all recovery rates are linear functions $r_{u_i} : [a_v^{(j+1)},a_v^{(j)}) \to [0,1]$. Slopes and offsets of these functions can be found easily. For a given ordering of $r_{u_i}(a'_v)$ to become infeasible, there must be a true intersection of (at least) two of the linear functions. For $k$ linear functions there are at most $k^2$ true intersections. These intersections and the resulting orderings are computed by solving linear equations.

Overall, there are $l+1$ parts in the hierarchy. Each part gives rise to at most $k^2$ different orderings of the recovery rates. The combinations can be enumerated in polynomial time. 
\end{proof}

\section{Multi-Trades of Outgoing Edges}
\label{sec:outgoing}
Recall that for a multi-trade of outgoing edges, we are given a financial network, a debtor $u$ and buyer $w$, and we denote $u$'s creditors by $v_1,v_2,\ldots,v_k$. The goal is to compute a Pareto-positive (fractional) multi-trade of outgoing edges of $u$, i.e., none of the buyer $w$ or the creditors $v_i$ are harmed while at least one $v_i$ strictly profits. We strive to find a Pareto-positive trade that maximizes the total assets (or, equivalently, the total increase in assets) of all creditors, i.e., $\max \sum_{i \in [k]} a'_{v_i}$. 

For networks with the default cost $\delta \in [0,1)$ we show the following hardness result. It extends to the case when we restrict attention to trades with a natural form of returns. We say $w$ \emph{pays excess returns} if it gives a return $\rho_i \ge r'_u \cdot \beta_i \ell_{e_i}$ to each bank $v_i$. An excess return $\rho_i$ exceeds the payments $r'_u \cdot \beta_i \ell_{e_i}$ received by $w$ on the traded fraction of the edge in the clearing state of the post-trade network.
\begin{theorem}
    \label{thm:NPC}
    It is \classNP-hard to compute a Pareto-positive
    \begin{enumerate}
        \item multi-trade of outgoing edges that maximizes the total assets of all creditors,
        \item multi-donation that maximizes the total assets of all banks,
    \end{enumerate}
    The hardness results apply also when we restrict attention to trades with excess returns.
\end{theorem}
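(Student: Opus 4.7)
The plan is to reduce \textsc{Partition} to both variants simultaneously. Recall from Section~\ref{sec:model} that a multi-donation is represented as a multi-trade of outgoing edges via an auxiliary source bank $u$ (zero external assets, edges of infinite liability to each recipient). Hence any hardness reduction for Part~2 immediately yields one for Part~1, and since $r'_u = 0$ for this auxiliary source, the excess-returns condition $\rho_i \ge r'_u \beta_i \ell_{e_i}$ degenerates to $\rho_i \ge 0$ and is automatically satisfied. I will therefore describe a single network that witnesses hardness for both variants and for the excess-returns restriction.

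\medskip\noindent\textbf{Construction.} Given integers $s_1,\ldots,s_n$ with $\sum_i s_i = 2T$, I build a network with default cost $\delta \in [0,1)$ as follows. Introduce a donor/buyer $w$ with $a_w^x = T$ and no outgoing edges, so that $w$ is trivially solvent with $L_w = 0$. For each item $i$, add a recipient $v_i$ with $a_{v_i}^x = 0$ and a single outgoing edge $(v_i, w)$ of liability $s_i$. In the pre-donation clearing state every $v_i$ has zero assets and zero outgoing payment, so $a_w = T$ and $\sum_b a_b = T$.

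\medskip\noindent\textbf{Analysis.} For any donation vector $(\rho_1,\ldots,\rho_n)$ with $\sum_i \rho_i \le T$, the bank $v_i$ becomes solvent exactly when $\rho_i \ge s_i$; in that case $a'_{v_i} = \rho_i$ and the payment back to $w$ is $s_i$. Otherwise $v_i$ stays insolvent, $a'_{v_i} = \delta \rho_i$, and the payment to $w$ is $\delta \rho_i$. A direct expansion then gives
\[
    a'_w - a_w \;=\; \sum_{i \,:\, \rho_i \ge s_i} (s_i - \rho_i) \;-\; (1-\delta) \sum_{i \,:\, \rho_i < s_i} \rho_i,
\]
in which every summand is non-positive. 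Since $\delta < 1$, the Pareto-positivity requirement $a'_w \ge a_w$ forces every summand to vanish, i.e.\ $\rho_i \in \{0, s_i\}$ for every $i$. Thus every Pareto-positive donation corresponds to a subset $S \subseteq [n]$ with $\sum_{i \in S} s_i \le T$, and then $a'_w = T$, with $a'_{v_i} = s_i$ for $i \in S$ and $a'_{v_i} = 0$ otherwise.

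\medskip\noindent\textbf{Conclusion.} The Part~2 objective evaluates to $\sum_b a'_b = T + \sum_{i \in S} s_i$ and the Part~1 objective (through the auxiliary-source representation) to $\sum_i a'_{v_i} = \sum_{i \in S} s_i$. Both are maximized over Pareto-positive choices of $S$, and the maximum equals $2T$ (respectively $T$) if and only if the \textsc{Partition} instance admits a subset of exact sum $T$. Deciding the optimum therefore decides \textsc{Partition}. The main obstacle is the discretization claim: Pareto-positivity at $w$ combined with $\delta < 1$ must rule out both partial donations $\rho_i \in (0, s_i)$ and over-donations $\rho_i > s_i$. This is what turns a seemingly continuous optimization over $(\rho_1,\ldots,\rho_n)$ into a combinatorial subset-selection problem, and it is also what makes the hardness disappear in the $\delta = 1$ case, consistent with the companion efficient algorithm announced at the end of Section~\ref{sec:outgoing}.
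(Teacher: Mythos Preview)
Your proof is correct. The discretization argument is clean: with $\delta<1$ and $w$ a pure sink, the identity
\[
a'_w - a_w \;=\; \sum_{i:\rho_i\ge s_i}(s_i-\rho_i)\;-\;(1-\delta)\sum_{i:\rho_i<s_i}\rho_i
\]
forces $\rho_i\in\{0,s_i\}$ under Pareto-positivity, which collapses the continuous problem to subset selection. The treatment of both objectives is sound (since $a'_w=T$ is constant on the feasible set, maximizing $\sum_b a'_b$ and maximizing $\sum_i a'_{v_i}$ coincide here), and the observation that $r'_u=0$ makes every return trivially an excess return is exactly right.

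Your route differs from the paper's. The paper reduces from \textsc{Set Packing} using a three-layer gadget (set banks $S_i$, element banks $x_j$, and $w$), where two mechanisms interact: default cost forces each selected $S_i$ to receive exactly $M$, and the element banks absorb any surplus when two selected sets overlap, so Pareto-positivity at $w$ encodes disjointness. Your construction is considerably more elementary---a single layer of recipients with only the default-cost mechanism doing the work---and it makes the dependence on $\delta<1$ more transparent. The price is that \textsc{Partition} is only weakly \classNP-hard, whereas \textsc{Set Packing} is strongly \classNP-hard; the paper's reduction therefore rules out pseudo-polynomial algorithms as well. Since the theorem statement only claims \classNP-hardness, your argument fully establishes it, but the paper's version is quantitatively stronger.
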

\begin{proof}
    We start with the proof for multi-trades and show a reduction from Set Packing (c.f., ~\cite{HoeferVW24}). In an instance of Set Packing, there are $m$ elements $x_1,\ldots,x_m$ and $k$ sets $S_1,\ldots,S_k$. Each set contains a subset of the elements. Moreover, there is an integer $k > 0$. We need to decide whether or not one can select $l$ of the sets such that all selected sets remain mutually disjoint. 
    
    \begin{figure}
    \begin{center}
    \resizebox{0.96\textwidth}{!}{
    \begin{tikzpicture}[scale=0.2, >=stealth', shorten >=1pt, auto, node distance=2cm]
        \node[circle, draw, inner sep=3.3] (v) at (0,0) {$u$};
        \node[bank] (s1) [below left=1cm and 3.3cm of v] {$S_1$};
        \node[bank] (s2) [right=2cm of s1] {$S_2$};
        \node[bank] (s3) [right=of s2] {$S_3$};
        \node (sdots) [right=0.4cm of s3] {$\cdots$};
        \node[bank] (sl) [right=2cm of s3] {$S_k$};
        \node[circle, draw, inner sep=2] (u3) [below=2.42cm of v] {$x_3$};
        \node[circle, draw, inner sep=2] (u2) [left=of u3] {$x_2$};
        \node[circle, draw, inner sep=2] (u1) [left=of u2] {$x_1$};
        \node[circle, draw, inner sep=2] (u4) [right=of u3] {$x_4$};
        \node (edots) [right=0.4cm of u4] {$\cdots$};
        \node[circle, draw, inner sep=1.8] (um) [right=of u4] {$x_m$};
        \node[circle, draw, inner sep=2.8] (w) [below=1cm of u3] {$w$};
        \node[rectangle, draw=black, inner sep=2.5pt] (xw) [below=0.1cm of w] {\footnotesize $lM$};
        \draw[->] (v) --node[left=0.3cm, pos=0.4] {\footnotesize$M$} (s1);
        \draw[->] (v) --node[left=0.05cm, pos=0.7] {\footnotesize$M$} (s2);
        \draw[->] (v) --node[right=0.05cm, pos=0.7] {\footnotesize$M$} (s3);
        \draw[->] (v) --node[right=0.3cm, pos=0.4] {\footnotesize$M$} (sl);
        \draw[->] (s1) --node[left, pos=0.3] {\footnotesize{$1$}} (u1);
        \draw[->] (s1) --node[left=0.1cm, pos=0.3] {\footnotesize{$1$}} (u2);
        \draw[->] (s1) --node[right=0.3cm, pos=0.1] {\footnotesize{$1$}} (u3);
        \draw[->] (s2) --node[right=0.2cm, pos=0.25] {\footnotesize{$1$}} (u1);
        \draw[->] (s2) --node[right, pos=0.4] {\footnotesize{$1$}} (u3);
        \draw[->] (s2) --node[right=0.2cm, pos=0.05] {\footnotesize{$1$}} (u4);
        \draw[->] (s3) --node[left, pos=0.1] {\footnotesize{$1$}} (u3);
        \draw[->] (s3) --node[left, pos=0.4] {\footnotesize{$1$}} (u4);
        \draw[->] (sl) --node[left, pos=0.4] {\footnotesize{$1$}} (um);
        \draw[->] (s1) to[in=200, out=200, looseness=2.5] node[left, midway] {\footnotesize{$M-|S_1|$}} (w);
        \draw[->] (sl) to[in=-20, out=-20, looseness=2.5] node[right, midway] {\footnotesize{$M-|S_k|$}} (w);
        \draw[->] (u1) --node[below=0.3cm,pos=0.3] {\footnotesize $1$}(w);
        \draw[->] (u2) --node[right=0.1cm,pos=0.1] {\footnotesize $1$}(w);
        \draw[->] (u3) --node[left=-0.1cm,pos=0.4] {\footnotesize $1$} (w);
        \draw[->] (u4) --node[left=0.1cm,pos=0.2] {\footnotesize $1$}(w);
        \draw[->] (um) --node[below=0.3cm,pos=0.4] {\footnotesize $1$}(w);
    \end{tikzpicture}
    }
    \end{center}
    \vspace*{-0.5cm}
    \caption{Every bank $S_i$ has an edge to $w$ with liabilities $M-|S_i|$. Default cost is given by $\delta < 1$.}
    \label{fig:default-NP}
    \end{figure}
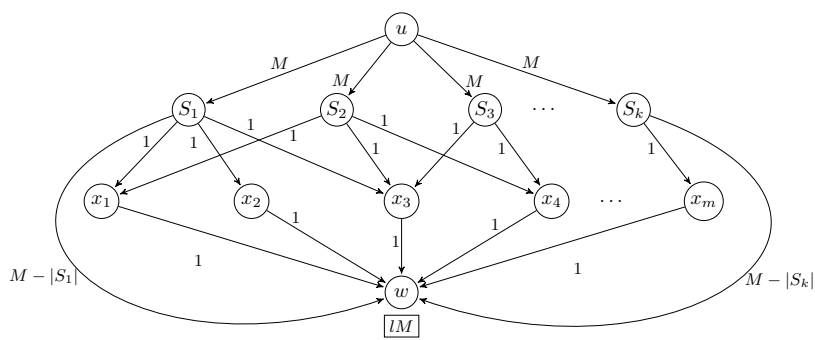

    Given an instance of Set Packing, we construct a network as depicted in Figure~\ref{fig:default-NP}. There is default cost in the sense that $\delta < 1$. We claim that there exists a Pareto-positive multi-trade of outgoing edges with sum of profits $lM$ if and only if there exists a set packing of size $l$.

    First assume there exists a set packing $P$ of size $l$. Then, fully trade the edge $e_i = (u,S_i)$, for each $S_i \in P$, and set the return to $M$. Clearly, all $S_i$ included in the trade are solvent and thus able to make payments. Since $P$ is a feasible set packing, all banks $S_i$ included in the trade have pairwise disjoint creditors $x_j$. For these reasons, the total return is paid back to $w$ and the sum of profits is $lM$.

    For the other direction, assume there exists a multi-trade with sum of profits $lM$. Since $w$ must not be harmed and no cycles can emerge, the return paid by $w$ must flow back to $w$ entirely. Hence, every bank $S_i$ included in the trade with strictly positive return must receive $\rho_i=M$. Otherwise, $S_i$ would be insolvent, and part of the return would be lost due to default cost. Since all banks $S_i$ with $\rho_i=M$ forward all return back to $w$, their creditors $x_j$ must be pairwise distinct. Together with the fact there are $k$ such banks, this implies that there exists a set packing of size $l$.

    To adjust the construction for multi-donations, remove node $u$ and all edges $(u,S_i)$. Since $w$ wants to maximize the total assets of all banks, intuitively, donating to banks $S_i$ is more attractive than donating to banks $x_j$. Every unit donated to a bank $S_i$ (optimally) can benefit two banks ($S_i$ and one $x_j$, rather than only $x_j$) before returning to $w$. For a Pareto-positive donation, we must ensure that all donations fully return to $w$ eventually. Hence, all banks except $w$ must forward all their assets, and no money must be lost due to default cost. These conditions allow to apply the same arguments as above. In an optimal donation, $w$ only donates payments of $M$ to a subset of the banks $S_i$ and achieves total assets of $3lM$ in the network if and only if the instance of Set Packing is solvable.

%
    Finally, since in our instances $r'_u = 0$ always, all returns given by $w$ are trivially excess returns.
\end{proof}

For a single trade in a network without default cost (see, e.g., the example in Fig.~\ref{fig:ex-intro}), excess returns are necessary for a creditor-positive trade (since otherwise $w$ would strictly profit, and hence $v$ does not, by Proposition~\ref{prop:noPositive}). For multi-trades of outgoing edges, limiting attention to excess returns can represent a true restriction. Consider a network with edges $(u,v_1)$, $(u,v_2)$, $(v_1,v_2)$, and $(v_2,w)$, all with $\ell_e = 4$. $u$ has external assets of 2, $w$ has external assets of 4. The total assets become $a_u = 2$, $a_{v_1} = 1$, $a_{v_2} = 2$ and $a_w = 6$. There is an optimal Pareto-positive trade where both $(u,v_1)$ and $(u,v_2)$ are traded entirely to $w$. The return is \emph{4 for $v_1$ and 0 for $v_2$}. This maximizes the total post-trade assets of $v_1$ and $v_2$, which become $a'_{v_1} = a'_{v_2} = 4$. However, these are \emph{not excess returns}: $w$ gives $v_2$ no return (directly) even though it receives a payment of 1 from $u$ on the traded edge $(u,v_2)$.

Even though excess returns can represent a restriction, they arguably constitute a natural subclass of trades. Moreover, they have favorable properties in terms of computational complexity. Suppose we restrict attention to Pareto-positive multi-trades of outgoing edges \emph{in which $w$ pays excess returns}. Then the problem of computing such trades becomes efficiently solvable in networks without default cost. In particular, this also holds true for multi-donations, since they can be seen as multi-trades with excess returns. As such, the hardness in Theorem~\ref{thm:NPC} crucially relies on default cost. 

\begin{theorem}
    \label{thm:excess}
    Given a financial network $\calF$ without default cost, debtor $u$, and buyer $w$, there is a polynomial-time algorithm to compute a Pareto-positive multi-trade of outgoing edges with excess returns that maximizes the total assets of all creditors, or decide that none exists.
\end{theorem}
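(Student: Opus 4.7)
The plan is to reduce the problem to a polynomial number of linear programs via a default-hierarchy decomposition, mirroring the strategy of Theorem~\ref{thm:singleTradeVar}. First I would reparameterize an excess trade by its \emph{effective donation} amounts $\tau_i := \rho_i - r'_u \beta_i \ell_{e_i} \ge 0$. A direct inspection of the post-trade fixed-point equations shows that every excess trade yields the same clearing state as the network obtained by transferring $\tau_i$ directly from $w$'s external assets to each $v_i$: the payment $r'_u \beta_i \ell_{e_i}$ that $u$ makes to $w$ on the traded fraction of $e_i$ is exactly returned to $v_i$ as part of $\rho_i$, so $v_i$'s net incoming equals $a_{v_i}^x + \tau_i + r'_u \ell_{e_i} + (\text{other})$ and $w$'s net external change equals $-\sum_i \tau_i$, matching the donation network term by term. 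Conversely, a donation $(\tau_i)$ can be realized as an excess trade precisely when $\tau_i \le (1-r'_u)\beta_i \ell_{e_i}$ for some $\beta_i \in [0,1]$ together with $\sum_i (\tau_i + r'_u \beta_i \ell_{e_i}) \le a_w^x$.

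Next I would discretize the post-trade solvency structure. Since Pareto-positivity forces $a'_b \ge a_b$ for every bank $b$, the post-trade set of solvent banks can only grow as donations increase. I would construct a default hierarchy analogous to Section~\ref{sec:hierarchy}, parameterized by the total effective donation $\sum_i \tau_i$ (or an equivalent scalar that drives the growth of the solvent set) together with the order in which the $v_i$ reach their solvency frontiers, in the spirit of the proof of Theorem~\ref{thm:incoming}. Within a single level of the hierarchy the solvent set is fixed, so in the absence of default cost the recovery rates are linear functions of the assets and the fixed-point equations linearize.

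Within each level I would then solve an LP maximizing $\sum_i a'_{v_i}$ subject to the fixed-point and Pareto constraints (as in~\eqref{eq:basicFP} with $\delta = 1$), the trade constraints $\beta_i \in [0,1]$, $\rho_i \in [r'_u \beta_i \ell_{e_i}, \beta_i \ell_{e_i}]$, $\sum_i \rho_i \le a_w^x$, and $a'_b \ge a_b$ for all $b$. The central non-linearity is the bilinear term $r'_u \beta_i \ell_{e_i}$, which I would handle using Proposition~\ref{prop:structure}: restricting to either the case $\beta_i = 1$ for all $i$ (where the term reduces to the linear $r'_u \ell_{e_i}$) or the case $\sum_i \rho_i = a_w^x$ (absorbing the bilinearity via a substitution $y_i = r'_u \beta_i \ell_{e_i}$, analogous to~\eqref{eq:varLP}).

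The main obstacle is keeping the hierarchy polynomial while accommodating several edges perturbed in parallel. Unlike the single-trade setting, the interactions among the $e_i$ make the hierarchy multi-dimensional a priori; I would argue as in the proof of Theorem~\ref{thm:incoming} that within each level the recovery rate of $u$ and of each $v_i$ varies linearly in a suitable one-dimensional parameterization, so only polynomially many level transitions and polynomially many orderings of the $v_i$ arise. Taking the best LP solution across all levels and sub-cases then yields the optimal Pareto-positive excess-return multi-trade of outgoing edges, or certifies non-existence if every LP is infeasible.
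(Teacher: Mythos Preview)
Your reparameterization by the effective donation $\tau_i = \rho_i - r'_u\beta_i\ell_{e_i}$ is exactly the right first move and coincides with the paper's variable $\eta_i$. From that point on, however, the paper takes a far simpler route than you propose: it writes a \emph{single} LP and solves it once, with no default hierarchy at all.

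The key observation you are missing is twofold. First, in a network without default cost the fixed-point constraints $a'_b = a_b^x + \sum r'_{b'}\ell_{(b',b)}$, $r'_b \le a'_b/L_b$, $r'_b \le 1$ are already linear without fixing the solvent set; combined with the Pareto constraints $a'_b \ge a_b$ and a maximization objective, the clearing state emerges automatically (same argument as at the end of the proof of Theorem~\ref{thm:singleTradeVar}). Second, once you commit to the $\tau_i$ parameterization, the trade-feasibility constraints linearize directly: for given $\tau_i$ the \emph{minimal} feasible traded fraction is $\beta_i = \tau_i/((1-r'_u)\ell_{e_i})$, which makes $\rho_i = \beta_i\ell_{e_i}$ and hence $\sum_i \rho_i \le a_w^x$ becomes $\sum_i \tau_i \le a_w^x(1-r'_u)$, and $\beta_i \le 1$ becomes $\tau_i \le (1-r'_u)\ell_{e_i}$. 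Both are linear in $(\tau_i, r'_u)$ because $a_w^x$ and $\ell_{e_i}$ are constants. No case split via Proposition~\ref{prop:structure} and no substitution $y_i = r'_u\beta_i\ell_{e_i}$ is needed; in fact your proposed substitution in the case $\sum_i\rho_i = a_w^x$ still leaves a bilinear constraint $\tau_i r'_u \le y_i(1-r'_u)$, so that branch of your plan does not close.

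Your default-hierarchy plan also has a genuine gap. In Theorem~\ref{thm:incoming} the hierarchy is one-dimensional because a single bank $v$ receives all donations; here the $k$ creditors $v_1,\ldots,v_k$ are perturbed independently, so the solvent set depends on the full vector $(\tau_i)$, not on a scalar. Your suggestion to parameterize by $\sum_i \tau_i$ ``or an equivalent scalar'' does not work in general (different allocations of the same total yield different solvent sets), and the appeal to the proof of Theorem~\ref{thm:incoming} does not transfer since that argument also exploits the single-creditor structure. So even if the LP within each level were set up correctly, you have not established that the number of levels is polynomial.
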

\begin{proof}
    We define $\eta_i \ge 0$ as the excess return, i.e., the portion paid to $v_i$ \emph{in addition} to the post-trade payments on the traded edge fraction $r'_u \cdot \beta_i \ell_{e_i}$. Thus, the total return for $v_i$ is $\rho_i = \eta_i + r_u \beta_i \ell_{e_i} \le \beta_i \ell_{e_i}$, and hence $\eta_i \le (1-r'_u) \beta_i \ell_{e_i}$. We just postulate the constraint $0 \le \eta_i \le (1-r'_u) \ell_{e_i}$ and thereby implicitly define $\beta_i = \eta_i/((1-r'_u)\ell_{e_i})$. 

    Observe that we assume $r'_u < 1$, since otherwise $\beta_i$ is not well-defined. To justify this assumption, suppose we see $r'_u = 1$ after a Pareto-optimal trade with excess returns. Then all edges are fully paid for and $u$ is solvent. On each edge, we can return at most $\rho_i \le \beta_i \ell_{e_i}$, and due to excess returns, we must have $\rho_i \ge 1 \cdot \beta_i \ell_{e_i}$. As such, $w$ returns to $v$ precisely the payments received from $u$ via the traded portion. Thus, clearly, $r_u = 1$ must have been true \emph{before} the trade. In this case, it is easy to see that a trade of outgoing edges of $u$ cannot be Pareto-positive. The condition $r_u = 1$ is observable from the initial clearing state in $\calF$. Hence, we can indeed assume w.l.o.g.\ that $r'_u < 1$.

    Notably, we define $\beta_i$ in the way that the overall return paid to $v_i$ is \emph{exactly} $\beta_i \ell_{e_i}$. Since the excess return is $\eta_i = \beta_i (1-r'_u)\ell_{e_i} = \beta_i \ell_{e_i} - r'_u \beta_i \ell_{e_i}$, the overall return paid to $v_i$ is $\rho_i = \eta_i + r'_u \beta_i \ell_{e_i} = \beta_i \ell_{e_i}$. As such, we trade a portion of the edge whose liability exactly matches the return we pay. Moreover, 
    \begin{equation*}
    \sum_i \rho_i = \sum_i \eta_i + r_u \beta_i\ell_{e_i} = \sum_i \beta_i \ell_{e_i} = \sum_i \frac{\eta_i}{(1-r'_u) \ell_{e_i}} \ell_{e_i} \le a_w^x \quad \Longleftrightarrow \quad \sum_i \eta_i \le a_w^x (1-r'_u) 
    \end{equation*}
    Hence, we can express the upper bound on the total return as a linear constraint in the $\eta_i$ and $r'_u$. 

    For the constraints specifying the assets $a'_w$ and $a'_{v_i}$, we can assume that $v_i$ continues to keep the entire incoming edge $\ell_{e_i}$ and gets the additional return of $\eta_i$ as a donation from $w$. This is equivalent to $w$ taking a $\beta_i$-share of edge $\ell_{e_i}$ and then returning to $v_i$ the payment over that edge plus $\eta_i$. We only need to make sure that the donation of $\eta_i$ remains equivalent to a return payment that can be made using a partial assignment of edge $(u,v_i)$ to $w$ and is not too large (i.e., does not exceed an $(1-r'_u)$ share of the liability of the traded portion). Using this equivalent view, we can formulate the problem of finding an optimal Pareto-positive trade as the following linear program:
    
    \begin{equation}
        \label{eq:fracMultiTradeOutLP}
        \begin{aligned}
            \text{Max. } & \sum_{i\in[k]} a_{v_i}'\\
            &a_b' \geq a_b &\forall b\in V\\
            &a_b' = a_b^x + \sum_{(b',b)\in E^-(b)} r_{b'}'\cdot \ell_{(b',b)} &\forall b\in V \setminus \{v_1,\ldots,v_k,w\}\\
            &a_{v_i}' = a^x_{v_i} + \sum_{(b',v_i)\in E^-(v_i)} r_{b'}'\cdot \ell_{(b',v_i)} + \eta_i &\forall i \in [k]\\
            &a_w' = a_w^x + \sum_{(b',w)\in E^-(w)} r_{b'}'\cdot \ell_{(b',w)} - \sum_{i=1}^k \eta_i \\
            &\sum_{i=1}^k \eta_i \le a_w^x (1-r'_u) \\
            &\eta_i \in [0, (1-r'_u)\ell_{e_i}] & \forall i \in [k]\\
            &r_b' \leq a_b'/L_b &\forall b\in V\\
            &r_b' \in [0,1] &\forall b\in V
        \end{aligned}
    \end{equation}
%
    In particular, if the sum of assets is the same as in the clearing state of the pre-trade network $\calF$, we know that no Pareto-positive trade exists.
\end{proof}

For multi-donations, recall the representation as multi-trades of outgoing edges. We use an auxiliary source bank with no external assets and auxiliary edges to each bank $b \in V \setminus \{w\}$. As such, every donation represents an excess return. Thus, the construction in the previous theorem can be adjusted to yield the following corollary. In the LP, we only need to replace the objective function by the sum of all (post-trade) assets. Clearly, instead of \emph{all} banks, we can also optimize the total assets of any given subset of banks.

\begin{corollary}
    \label{cor:multiDonation}
    Given a financial network $\calF$ without default cost and a donor $w$, there is a polynomial-time algorithm to compute a Pareto-positive multi-donation that maximizes the total assets of any given subset of banks, or decide that none exists.
\end{corollary}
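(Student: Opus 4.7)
The plan is to reduce the multi-donation problem to the setting of Theorem~\ref{thm:excess} by using the equivalent representation of donations as a multi-trade of outgoing edges. Following the construction in Section~\ref{sec:model}, I would introduce a single auxiliary source bank $u$ with no external assets ($a_u^x = 0$) and no incoming edges, together with an auxiliary edge $(u,b)$ of infinite liability for every bank $b \in V \setminus \{w\}$ to which $w$ might donate. Since $u$ has no assets at all in the pre-trade network, the clearing state assigns $r_u = 0$ on these auxiliary edges before the trade, so the donation target set can contain any subset of banks without affecting the rest of the network.

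Next I would verify the crucial property that makes Theorem~\ref{thm:excess} applicable: $u$ remains in default with $r'_u = 0$ even after any trade of its outgoing edges, because $u$ still has no incoming edges or external assets in the post-trade network. Consequently, for every outgoing edge traded to $w$, the payment received by $w$ on the traded fraction is $r'_u \cdot \beta_i \ell_{e_i} = 0$, so any non-negative return $\rho_i$ automatically satisfies $\rho_i \ge r'_u \cdot \beta_i \ell_{e_i}$, i.e., every return is an excess return. This means the restriction to excess returns in Theorem~\ref{thm:excess} is not a restriction at all in this reduction, and every multi-donation from $w$ corresponds to a feasible multi-trade of outgoing edges of $u$ with excess returns and vice versa.

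With the reduction in place, I would take the linear program~\eqref{eq:fracMultiTradeOutLP} from the proof of Theorem~\ref{thm:excess} applied to the augmented network, and substitute $r'_u = 0$ throughout (so the upper bound $\sum_i \eta_i \le a_w^x(1-r'_u)$ simply becomes $\sum_i \eta_i \le a_w^x$ and the box constraints $\eta_i \le (1-r'_u)\ell_{e_i}$ are vacuous because $\ell_{e_i} = \infty$). The only modification needed is in the objective: instead of $\sum_{i \in [k]} a'_{v_i}$, I would maximize $\sum_{b \in B} a'_b$ where $B \subseteq V$ is the given target subset. All feasibility and Pareto-improvement constraints $a'_b \ge a_b$ carry over unchanged, so optimal LP solutions still correspond to Pareto-positive multi-donations. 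Existence of a Pareto-positive donation can be decided by checking whether the optimum strictly exceeds the corresponding sum in the pre-trade clearing state.

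The main technical point to be careful about is the handling of the infinite liabilities on the auxiliary edges; I would argue that these can be replaced by any sufficiently large finite value (e.g., $L = 1 + \sum_{b\in V} a_b^x$), since in any feasible solution the total return $\sum_i \eta_i$ is bounded by $a_w^x$ and thus the fraction $\beta_i = \eta_i / \ell_{e_i}$ is well-defined and lies in $[0,1]$. No serious obstacle is expected beyond this bookkeeping, because the argument structure of Theorem~\ref{thm:excess} is entirely linear in $(\eta_i, r'_b, a'_b)$ and the objective change preserves linearity, so the LP remains solvable in polynomial time.
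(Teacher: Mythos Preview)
Your proposal is correct and follows essentially the same approach as the paper: reduce multi-donations to multi-trades of outgoing edges via an auxiliary source bank $u$, observe that $r'_u=0$ forces every return to be an excess return so Theorem~\ref{thm:excess} applies without loss, and then swap the objective in LP~\eqref{eq:fracMultiTradeOutLP} for the sum of assets over the given subset. Your added remarks on replacing the infinite liabilities by a sufficiently large finite bound and on how the constraints simplify under $r'_u=0$ are sound bookkeeping that the paper leaves implicit.
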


\section{Extensions and Open Problems}

We have designed algorithms to compute trades that achieve a Pareto-improvement over optimal creditor-positive trades in financial networks with default cost. Using our algorithms we can find optimal creditor-positive trades in networks without default cost, or show that they do not exist. Let us discuss several interesting consequences and direct extensions of our results.

\paragraph{Positive Trades of Incoming Edges.}
Creditor-positive trades, where a $a'_v > a_v$ for the creditor and $a'_w = a_w$ for the buyer, are a natural desiderata in networks without default cost and $\delta = 1$ (c.f.\ Proposition~\ref{prop:noPositive}). In networks with default cost and $\delta < 1$, it is easy to see that there are examples of \emph{positive} trades, in which \emph{both $v$ and $w$ improve strictly}. Towards this end, suppose we are given a larger target value $\omega > a_w$, and we strive to find trades that maximize $a'_v$ while ensuring $a'_w = \omega$. We call such a trade an optimal creditor-positive buyer-$\omega$-improved trade. It is straightforward to generalize Theorem~\ref{thm:incoming} to such trades. 
\begin{corollary}
    Suppose we are given a financial network $\calF$ with default cost $\delta$, a creditor $v$, a buyer $w \in V$, and an asset target $\omega > a_w$, for which a creditor-positive buyer-$\omega$-improved multi-trade of incoming edges exists. There is a polynomial-time algorithm to compute a trade that achieves at least the post-trade assets for all banks of any creditor-positive buyer-$\omega$-improved multi-trade of incoming edges.
\end{corollary}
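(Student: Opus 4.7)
The plan is to adapt the proof of Theorem~\ref{thm:incoming} in exactly two places: the setup of the split network used to compute the default hierarchy, and the LP constraints that tie down the post-trade assets of the buyer $w$.

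First, I would construct the split network $\calF_s(v,w)$ as before, but now set $a^x_{w_{out}} = \omega$ in place of $a^x_{w_{out}} = a_w$. Proposition~\ref{prop:default} still applies verbatim, yielding breakpoints $a_v^{(0)},\ldots,a_v^{(l)}$ and solvent subsets $S^{(0)},\ldots,S^{(l)}$ describing the solvency structure as a function of $a^x_{v_{out}}$ while $a^x_{w_{out}}$ is pinned to $\omega$. As in the original argument, any buyer-$\omega$-improved trade corresponds in $\calF_s(v,w)$ (with these external assets) to the same clearing state on all banks other than $v_{in}$ and $w_{in}$, so the solvent set arising post-trade is exactly one of the $S^{(j)}$. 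Enumeration over the levels of this hierarchy thus covers every possible scenario for the reference trade.

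Second, in the LP system of Theorem~\ref{thm:singleTradeVar} used as a subroutine inside Theorem~\ref{thm:incoming}, I would replace the constraint $a'_w \geq a_w$ from~\eqref{eq:moreThanCP} by $a'_w \geq \omega$, keeping the fixed-point constraints~\eqref{eq:basicFP}, the $\delta_w$ case split, and the two-case linearization ($\beta = 1$ versus $\rho = a_w^x$, using the substitution $y = r'_u \beta \ell_e$) unchanged. Since $\omega$ is a constant, linearity is preserved and each LP solves in polynomial time. The greedy structure Lemma~\ref{lem:greedyVar} is unaffected, as its statement and proof do not reference the target assets for $w$. Therefore the enumeration of (i) parts of the default hierarchy, (ii) orderings of the recovery rates $r'_{u_i}$, and (iii) the unique fractional index $i' \in [k]$ from Theorem~\ref{thm:incoming} still produces a polynomial number of LPs, the best of whose solutions Pareto-dominates every creditor-positive buyer-$\omega$-improved multi-trade of incoming edges.

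The only conceptual subtlety, which I do not view as a real obstacle, is that an optimal LP solution may satisfy $a'_w > \omega$ strictly. This is desirable rather than problematic: it only strengthens the Pareto-improvement relative to the reference trade, mirroring how Theorem~\ref{thm:singleTradeVar} may return a positive trade in place of a creditor-positive one. Consequently no equality constraint on $a'_w$ is required, and the same clearing-state replacement argument used in Theorem~\ref{thm:singleTradeVar} justifies that an optimal LP solution can be taken to be consistent with the actual (maximal fixed-point) clearing state in the post-trade network.
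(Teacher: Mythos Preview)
Your proposal is correct and matches the paper's approach: the paper simply states that ``It is straightforward to generalize Theorem~\ref{thm:incoming} to such trades'' without spelling out any details, and your two modifications (pinning $a^x_{w_{out}} = \omega$ in the default hierarchy and relaxing the buyer constraint in~\eqref{eq:moreThanCP} to $a'_w \ge \omega$) are precisely the natural changes this generalization requires. Your handling of the subtlety that the computed trade may have $a'_w > \omega$ is also in line with how Theorem~\ref{thm:singleTradeVar} treats the analogous issue for $a'_w > a_w$.
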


Conceptually, our algorithm obtains \emph{some} trade that weakly Pareto-improves the assets over every trade in a class of ``benchmark trades'' -- in networks where at least one benchmark trade exists. The trade we compute is not necessarily guaranteed to be, say, the one that globally optimizes the sum of assets, or the assets of $v$ (but the assets of $v$ will be at least as high as after any benchmark trade).

It is an interesting open problem to directly optimize over positive multi-trades of incoming edges w.r.t.\ both the assets of $v$ and $w$, e.g., finding ones that maximize $a'_v + a'_w$. It is likely that this requires to develop new methods, even for single trades. In particular, the default hierarchy must be adjusted to represent an appropriate improvement of both agents. 

\paragraph{Unbounded Returns.}
In our paper, we assume that returns must be paid from external assets of $w$ to provide guaranteed liquidity to creditor(s) $v_i$. One might also imagine a scenario, in which the return is simply a regular claim with higher priority, which is only bounded by the liability of the traded claim (or, even more generally, has arbitrary liability) without connection to the external assets of $w$. In such scenarios, we can apply the adjustment outlined in Proposition~\ref{prop:structure} to obtain an equivalent trade in which all edges are traded completely with $\beta_i = 1$.
\begin{corollary}
    \label{cor:unboundedReturn}
    If the returns paid by $w$ are not bounded by $a_w^x$, then for every fractional multi-trade there is an equivalent trade for which all potential trade edges are traded to $w$ completely.    
\end{corollary}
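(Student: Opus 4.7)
The plan is to observe that this corollary is essentially a strengthening of the adjustment argument already used in Proposition~\ref{prop:structure}, with the liquidity constraint $\rho \le a_w^x$ removed. Recall that in the proof of Proposition~\ref{prop:structure}, given a trade with fraction $\beta < 1$ and return $\rho$, we construct an equivalent trade with $\hat{\beta} = \beta + \eta$ and $\hat{\rho} = \rho + r'_u \cdot \eta$, where $\eta > 0$. The key invariant is that the post-trade recovery rate $r'_u$ of $u$ does not change: $u$ still pays the same amount in total on the (union of) multi-edges between $u$ and $\{v,w\}$, and the additional payment of $r'_u \eta$ that $w$ now receives on the enlarged traded portion is precisely compensated by the increased return to $v$. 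Hence the assets of $v$, $w$, and of every other bank in the network (including $u$) are unaffected.

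My first step is to restate this adjustment in the unconstrained-return setting: choose $\eta = 1 - \beta$, so that $\hat{\beta} = 1$ and $\hat{\rho} = \rho + r'_u (1-\beta)$. Since returns are no longer bounded by $a_w^x$, this choice is always admissible. The net effect on the clearing state and on all banks' assets is zero, by exactly the argument recalled above. Therefore every single trade admits an equivalent trade in which the edge is traded completely.

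Next, I would extend this to multi-trades of incoming (respectively, outgoing) edges by applying the adjustment edge-by-edge. For the multi-trade of incoming edges of $v$, replace each $(\beta_i,\rho_i)$ by $(1,\rho_i + r'_{u_i}(1-\beta_i)\ell_{e_i})$; for the multi-trade of outgoing edges of $u$, the same substitution with $\rho_i$ the return paid to $v_i$ works. Because each individual edge adjustment is neutral on the clearing state and on every bank's assets, the compositions of all such adjustments remains equivalent to the original trade. In both cases, the resulting trade satisfies $\beta_i = 1$ for every potential trade edge, as desired.

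I do not expect a major obstacle here: the only delicate point in Proposition~\ref{prop:structure} was balancing the termination condition against the external assets bound. Removing that bound makes $\hat{\beta} = 1$ attainable in a single step per edge, and the invariance of $r'_u$ (respectively of each $r'_{u_i}$ in the multi-trade setting) under the adjustment is already established. Hence the corollary follows directly by instantiating the adjustment of Proposition~\ref{prop:structure} with $\eta_i = 1 - \beta_i$ for every edge.
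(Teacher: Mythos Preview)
Your proposal is correct and follows essentially the same approach as the paper: both observe that the adjustment from Proposition~\ref{prop:structure} terminates only when $\beta_i = 1$ for all $i$ or when the cap $\rho = a_w^x$ is hit, and removing the cap forces the former. Your edge-by-edge application and the explicit choice $\eta_i = 1-\beta_i$ make the argument slightly more concrete than the paper's one-line reference, but the underlying idea is identical.
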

It then remains to compute appropriate returns. For networks without default cost, we construct the following LP to compute optimal returns for a variety of objective functions.

\begin{proposition}
    Given a financial network $\calF$ without default cost and a buyer $w$, there is a polynomial-time algorithm to compute optimal trades with unbounded returns for a variety of objective functions.
\end{proposition}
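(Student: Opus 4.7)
The plan is to apply Corollary~\ref{cor:unboundedReturn} to set $\beta_i = 1$ for every potential trade edge, so the only remaining decision variables are the returns $\rho_i$ paid from $w$ to the original creditor(s). Since the network has no default cost, once the $\rho_i$ are fixed the clearing state of the post-trade network is determined by a linear system of fixed-point equations, which makes it natural to encode everything in a single LP.

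First, I would build one linear program over the variables $\rho_i$ (one per trade edge $e_i$), post-trade assets $a_b'$, and post-trade recovery rates $r_b'$, with the fixed-point identities
\begin{align*}
    a_b' &= a_b^{x\prime} + \sum_{(b',b) \in E'^-(b)} r_{b'}' \cdot \ell'_{(b',b)} \qquad \forall b \in V,\\
    r_b' &\le a_b'/L_b, \quad r_b' \in [0,1] \qquad \forall b \in V, \quad \rho_i \ge 0 \quad \forall i,
\end{align*}
where $E'$ is the post-trade edge set obtained by redirecting each trade edge to $w$, and where the adjusted external assets $a_b^{x\prime}$ reflect that each $\rho_i$ is transferred from $w$ to the corresponding creditor. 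If returns are bounded by the traded claim's liability I add $\rho_i \le \ell_{e_i}$; if they are fully arbitrary no such bound is imposed. Pareto-improvement desiderata such as $a_b' \ge a_b$ for a designated subset of banks can be added as further linear constraints, and any linear objective on the $a_b'$ and $r_b'$ (e.g., $\max a_u'$, $\max \sum_i a_{v_i}'$, $\max \sum_b a_b'$, or any weighted combination) can then be optimized on top in polynomial time.

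The main correctness subtlety, identical to the one addressed in the proof of Theorem~\ref{thm:singleTradeVar}, is that these constraints encode \emph{some} fixed point of the post-trade payment system rather than necessarily the clearing state (the maximal one). For any linear objective that is monotone non-decreasing in the $a_b'$ and $r_b'$, this is harmless: at any LP optimum we may replace $(a_b', r_b')$ by the clearing state induced by the chosen $\rho_i$, which pointwise dominates, satisfies every constraint, and weakly improves the objective. The main obstacle---and the reason the statement is phrased with ``a variety of'' rather than ``all'' objective functions---is that non-monotone objectives could in principle be maximized by a strictly sub-maximal fixed point of the LP; ruling out such artefacts would require additional machinery, e.g., enumerating compatible solvency patterns or enforcing maximality through an auxiliary penalty term.
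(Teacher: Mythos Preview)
Your proposal is correct and follows essentially the same approach as the paper: invoke Corollary~\ref{cor:unboundedReturn} to fix $\beta_i=1$, then write a single LP over the returns $\rho_i$, assets $a_b'$, and recovery rates $r_b'$ with the standard fixed-point constraints, optional return bounds, optional Pareto-improvement constraints, and a linear objective. Your explicit discussion of the clearing-state-versus-fixed-point subtlety and the monotonicity requirement on the objective is a useful addition that the paper's own proof of this proposition leaves implicit.
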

\begin{proof}
    Suppose we trade a set $C \subseteq E$ of claims to bank $w$. Let $T = \{ b\in V \mid (u,b) \in C\}$ be the creditors of these claims. By Corollary~\ref{cor:unboundedReturn} we trade w.l.o.g.\ the entire set completely to $w$. Defining feasible returns then leads to the following set of linear fixed-point constraints.
    
    \begin{equation}
        \label{eq:unboundedReturns}
        \begin{aligned}
            &a_b' = a_b^x + \sum_{(b',b)\in E^-(b) \setminus C} r_{b'}'\cdot \ell_{(b',b)} + \rho_b &\forall b \in T\\
            &a_b' = a_b^x + \sum_{(b',b)\in E^-(b)} r_{b'}'\cdot \ell_{(b',b)} + \rho_b &\forall b \in V \setminus (T \cup \{w\})\\
            &a_w' = a_w^x + \sum_{(b',w)\in E^-(w)} r_{b'}'\cdot \ell_{(b',w)} + \sum_{(b',b) \in C} r_{b'}' \cdot \ell_{(b',b)} - \sum_{b \in T} \rho_b\\
            &r_b' \leq a_b'/L_b &\forall b\in V\\
            &r_b' \in [0,1] &\forall b\in V\\
            &0 \le \rho_b \le \sum_{(b',b) \in C} \ell_{(b',b)} &\forall b \in T
        \end{aligned}
    \end{equation}
    Here we still require that returns are bounded by the liability of the traded claims. This, however, can also be changed to (entirely) unbounded returns by dropping the upper bound in the last constraint. Moreover, we can add further natural (linear) constraints, such as Pareto improvement for $w$ ($a'_w \ge a_w$) or for all banks ($a'_b \ge a_b$ for all $b \in T$), or just an improvement in terms of total assets of $T$ ($\sum_{b \in T} a'_b \ge \sum_{b \in T} a_b$). Finally, any linear objective can be added, such as, e.g., maximizing total assets of the creditors $T$, or the sum of assets of $w$ and the creditors $T$, or a convex combination. The resulting LP can be solved efficiently.
\end{proof}
When returns are not bounded by $a_w^x$, there can be positive trades in which both $v$ and $w$ improve strictly, even in networks without default cost\footnote{Consider three banks $u$, $v$ and $w$, and two edges $(u,v)$ and $(v,w)$ each with liability 1. No bank has external assets. In this network there are no payments, and all assets are 0. Suppose $(u,v)$ is traded completely to $w$ with return 1. Given a return of 1, $v$ gets assets of 1, which are paid back entirely to $w$ via edge $(v,w)$. Thereby we ensure that $w$ can pay the return of 1. A consistent clearing state evolves, in which both $v$ and $w$ improve strictly.}. As such, objectives involving both the assets of $T$ and $w$ are indeed meaningful. 

It is an interesting open problem to generalize this result to networks with default cost. Further interesting open problems include computing fractional trades in settings with fixed haircut rates, in which $\alpha_i$ is fixed and given with the input for each edge $e_i$. Another open problem is whether networks without default cost allow to compute optimal Pareto-positive multi-trades of outgoing edges in polynomial time beyond the special case of excess returns. 

Finally, one could also imagine trades of both incoming and outgoing edges of a bank $v$. Then, however, $v$ might be rewarded for trading outgoing edges for which it is the debtor. Since the creditor usually has much more rights to manipulate the status of a claim, a trade that rewards the debtor seems less plausible (yet, potentially interesting from an algorithmic perspective).

\bibliography{literature}


\end{document}